\documentclass[journal]{IEEEtran}
\usepackage{lineno,hyperref}
\usepackage{bm}
\usepackage{mathrsfs}
\usepackage{amsmath}
\usepackage{amssymb}
\usepackage{textcomp}
\usepackage{graphicx}
\usepackage{epstopdf}
\usepackage{subfigure}
\modulolinenumbers[5]
\usepackage{geometry}
\usepackage{color}
\geometry{a4paper,scale=0.8}
\usepackage{amsthm}



\begin{document}


\title{\huge Sparsity-Aware SSAF Algorithm with Individual Weighting Factors for Acoustic Echo Cancellation}

\author{Yi Yu$^{a}$, Tao Yang$^{a}$, Hongyang Chen$^{b}$, Rodrigo C. de Lamare$^{c}$, Yingsong
Li$^{d}$ \\ a) School of Information Engineering, Robot Technology
Used for Special Environment Key Laboratory of Sichuan Province,
Southwest University of Science and Technology, Mianyang, 621010,
China b) the Research Center for Intelligent Networking, Zhejiang
Lab, Hangzhou 311121, China c) CETUC, PUC-Rio, Rio de Janeiro
22451-900, Brazil, and Department of Electronic Engineering,
University of York, York YO10 5DD, U.K. d) School of Information and
Communication Engineering, Harbin Engineering University, Harbin
150001, China}


\maketitle

\begin{abstract}
In this paper, we propose and analyze the sparsity-aware sign subband adaptive filtering with individual weighting factors (S-IWF-SSAF) algorithm, and consider its application in acoustic echo cancellation (AEC). Furthermore, we design a joint optimization scheme of the step-size and the sparsity penalty parameter to enhance the S-IWF-SSAF performance in terms of convergence rate and steady-state error. A theoretical analysis shows that the S-IWF-SSAF algorithm outperforms the previous sign subband adaptive filtering with individual weighting factors (IWF-SSAF) algorithm in sparse scenarios. In particular, compared with the existing analysis on the IWF-SSAF algorithm, the proposed analysis does not require the assumptions of large number of subbands, long adaptive filter, and paraunitary analysis filter bank, and matches well the simulated results. Simulations in both system identification and AEC situations have demonstrated our theoretical analysis and the effectiveness of the proposed algorithms.
\end{abstract}

\begin{IEEEkeywords}
Acoustic echo cancellation; impulsive noise; performance analysis; sign subband adaptive filter; sparse system.
\end{IEEEkeywords}



\section{Introduction}
Adaptive filtering algorithms have been extensively developed in Gaussian noise environments~\cite{sayed2003fundamentals,lee2009subband}, and representative examples are the least mean square (LMS) and normalized LMS (NLMS) algorithms. However, impulsive noise is often encountered in realistic environments such as echo cancellation, underwater acoustics, audio processing, communications, and prediction of time-series~\cite{nikias1995signal,zimmermann2002analysis,georgiou1999alpha,vega2008new,yu2019dcd,LDang2019}. Although the impulsive noise appears randomly with a small probability or a short duration, its realizations have large amplitude. In this situation, the algorithms based on Gaussian noise suffer from a poor convergence or even divergence. Aiming at impulsive noise, Mathews~\textit{et al.} first proposed the sign algorithm which minimizes the absolute value of the error signal~\cite{mathews1987improved}. The maximum correntropy criterion was frequently studied to present efficient and robust LMS-like algorithms in impulsive noise~\cite{chen2016generalized,chen2017kernel,chen2014steady}, thanks mainly to the strong compression capability of the correntropy function on the error signals with large amplitude.

The problem of the aforementioned algorithms is { the slow
convergence when the input signal to the adaptive filter is highly
correlated.} To speed up the convergence, the subband adaptive
filter (SAF) is one of the promising
approaches~\cite{lee2009subband}. In the SAF, the input signal is
divided into multiple subband components through the analysis
filters, { and then the decimated subband input signals that have
approximately uncorrelated samples} are used for updating the
filter's weights. Among the SAF's structures, its multiband
structure updates the fullband-like filter's weights is more
promising due to avoiding the aliasing and band edge
effects~\cite{lee2009subband}. Based on the multiband structure, the
normalized SAF (NSAF) algorithm in~\cite{lee2004improving} converges
faster than the NLMS algorithm for highly correlated inputs. The
excess computational complexity of the NSAF over the NLMS is
trivial, especially for long filter applications such as echo
cancellation. By incorporating the sign algorithm into the SAF, the
sign subband adaptive filter (SSAF) algorithm was proposed
in~\cite{ni2010signsubband}, with good robustness against impulsive
noise. By fully taking advantage of the decorrelation feature of
SAF, the individual-weighting-factors based SSAF (IWF-SSAF)
algorithm~\cite{yu2016novel} provides faster convergence than the
SSAF algorithm. { Note that, when users choose the fixed step-size,
both SSAF algorithms need to consider a trade-off between fast
convergence and low steady-state error. To address this problem},
several variable step-size (VSS) strategies were developed for both
algorithms that drive fast convergence and low steady-state error
simultaneously, to name a few, the VSS-SSAF~\cite{shin2013variable}
and band-dependent VSS SSAF (BDVSS-SSAF)~\cite{yoo2014band}
algorithms from the mean-square deviation (MSD) minimization, and
the novel VSS SSAF~\cite{kim2013sign} and band-dependent VSS
IWF-SSAF (BDVSS-IWF-SSAF)~\cite{kim2017delayless} algorithms based
on the $l_1$-norm minimization, and the robust VSS SSAF
algorithm~\cite{wen2017robust}. Among them, the VSS strategies
in~\cite{kim2013sign} and~\cite{kim2017delayless} do not require the
\emph{a priori} knowledge of the surrounding noise, e.g., the noise
variance and the occurrence probability of impulsive noise. The
performance analysis is always a vital research issue for adaptive
filtering
algorithms~\cite{sayed2003fundamentals,paul2011convergence,zheng2017steady,de2005statistical,Zheng2019Steady-State,zheng2017robust}.
It is beneficial to support the effectiveness of a specific adaptive
filtering algorithm in theory and to give useful insights to further
improve the adaptive filter's performance. { It is worth noting that
one often pays attention to the performance analyses of adaptive
filtering algorithms in Gaussian noise surroundings.} For example,
there have been different analysis models on the MSD behavior of the
NSAF
algorithm~\cite{yin2011stochastic,shin2018adaptive,jeong2016mean,zhang2019mean}.
Relatively speaking, it is more difficult to analyze the performance
of robust adaptive filtering algorithms in impulsive noise.
In~\cite{yu2016steady}, the steady-state MSD of the SSAF algorithm
in impulsive noise was analyzed based on the energy conservation
relation, and in~\cite{yu2016novel} the same analysis pattern was
also extended to the IWF-SSAF algorithm. However, this analysis
approach relies on the assumptions of large number of subbands and
long adaptive filter. By assuming the background noise to be
Gaussian, the analytical expression in~\cite{shin2017steady} shows
better accuracy than the one in~\cite{yu2016steady} for the
steady-state MSD of the SSAF algorithm. { Nevertheless, the
advantage of the SSAF algorithm is working in impulsive noise, so in
this case the analysis in~\cite{shin2017steady} is not applicable.}

{ On the other hand, the aforementioned algorithms have not
exploited the underlying sparsity of the systems.} Sparse systems
are common in practice, with the property that its impulse response
only has a few large non-zero coefficients (\emph{active
coefficients}) and the remaining coefficients are zero or approach
zero (\emph{inactive coefficients}), such as network/acoustic echo
channels~\cite{radecki2002echo,hansler2006topics}, underwater
acoustic channels~\cite{pelekanakis2014adaptive}, and digital TV
transmission channels~\cite{schreiber1995advanced}. In order to
favor such sparsity, the sparsity-aware technique is popular in
adaptive filtering
algorithms~\cite{haddad2014transient,gu2009norm,de2014sparsity,yu2019sparsity}
that adds the sparse constraint term in the original cost function.
{ In the survey of robust SAF against impulsive noise,
sparsity-aware approaches were only incorporated straightforwardly
into the SSAF~~\cite{choi2014new} and normalized logarithmic
SAF~\cite{shen2018L0} algorithms, and have not been analyzed
theoretically yet. Also, the resulting algorithms require properly
choosing the sparsity penalty parameter in a trial and error way,
thereby limiting their usefulness.}

{ It is remarked that the acoustic echo cancellation (AEC)
application involves the above-mentioned three characteristics: high
correlation of speech input signals, double-talk that is one type of
impulsive noise scenarios, and sparsity of acoustic echo channels.
For the sake of these requirements, therefore, this paper will focus
on studying} the sparsity-aware IWF-SSAF (S-IWF-SSAF) algorithm. Our
main contributions are as follows:

1) By incorporating the sparsity-aware technique, we propose the S-IWF-SSAF algorithm, and analyze its performance in-depth in impulsive noise. The analysis result reveals that S-IWF-SSAF can be superior to IWF-SSAF in sparse system environments, but it requires properly choosing the sparsity penalty parameter in a certain range.

2) The proposed analysis covers the behaviors of the IWF-SSAF
algorithm in impulsive noise. { Even though for the IWF-SSAF
algorithm, the proposed analysis} is significantly more accurate
than the analysis in~\cite{yu2016novel} and closely matches the
simulations, because it obviates the assumptions of a large number
of subbands, long adaptive filter, and paraunitary analysis filter
bank.

3) We devise a joint optimization scheme to automatically choose the step-size and the sparsity penalty parameter, which further improves the convergence and steady-state performance of the S-IWF-SSAF algorithm.

4) { In order to make the proposed algorithms suitable for AEC}, we
develop delayless implementation of the proposed algorithms and
carry out a simulation study.

The notations used in this paper are listed in Table~1. This paper is organized as follows. In Section~2, we state the SAF problem of interest and briefly review the IWF-SSAF algorithm. Then, we propose the S-IWF-SSAF algorithm in Section~3 and analyze its performance in Section~4. In Section~5, the VSS mechanism and the variable sparsity penalty parameter for the S-IWF-SSAF algorithm are devised. In Section~6, simulation results in both system identification and AEC scenarios are presented. Finally, conclusions are given in Section~7.
\begin{table}[tbp]
    \scriptsize
    \centering
    \vspace{-1em}
    \caption{Some Mathematical Symbols.}
    \label{table_3}
    \begin{tabular}{l|c}
        \hline
        \textbf{Notations} &\textbf{Description}\\
        \hline
        $(\cdot)^\text{T}$     &\text{transpose of a vector or matrix}\\
        $||\cdot||_2$   &\text{$l_2$-norm of a vector}\\
        $\text{sgn}(\cdot)$  &\text{sign function}\\
        $\text{E}\{\cdot\}$  &\text{expectation of a random variable} \\
        $\text{Tr}(\cdot)$  &\text{trace of a matrix}\\
        $\text{vec}(\cdot)$  &yielding an $L^2\times 1$ vector from an $L\times L$ matrix by successively stacking the columns of the matrix\\
        $\text{vec}^{-1}(\cdot)$  &the inverse operation of $\text{vec}(\cdot)$\\
        $\otimes$ &Kronecker product of two matrices\\
        \hline
    \end{tabular}
\end{table}

\section{Problem Statement and The IWF-SSAF Algorithm}
Let us consider a system identification problem that identifies the impulse response of the unknown system, denoted as an $M$-length column vector. By feeding the input signal $u(n)$ into the unknown system, the desired signal $d(n)$ of the system at discrete time~$n$ is formulated as
\begin{align}
\label{001}
d(n) = \bm u^\text{T}(n) \bm w^o+v(n)
\end{align}
where $\bm u(n)=[u(n), u(n-1),...,u(n-M+1)]^\text{T}$ is the $M\times1$ input vector, and $v(n)$ denotes the additive noise.
\begin{figure}[htb]
    \centering
    \includegraphics[scale=0.35]{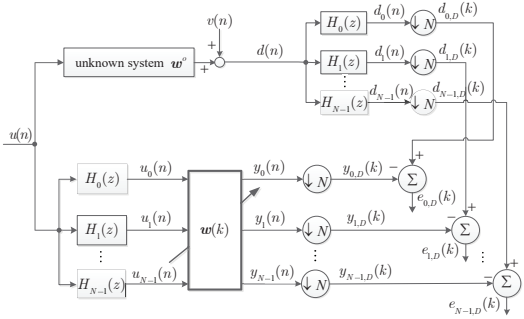}
    \vspace{-1em} \caption{Multiband structure of SAF.}
    \label{Fig1}
\end{figure}

Fig.~\ref{Fig1} shows the multiband structure of SAF with~$N$ subbands~\cite{lee2009subband}. The desired signal $d(n)$ and the input signal $u(n)$ are decomposed into multiple subband signals $d_i(n)$ and $u_i(n)$, respectively, through the analysis filter bank $\{H_i(z)\}_{i=0}^{N-1}$. Then, for each band~$i$, the signal $u_i(n)$ is filtered by the common filter (whose weight vector is $\bm w(k)$) to get the output signal $y_i(n)$. By critically decimating sequences $d_i(n)$ and $y_i(n)$, respectively we obtain lower sampled-rate sequences $d_{i,D}(k)$ and $y_{i,D}(k)$, namely, $d_{i,D}(k)=d_i(kN)$ and $y_{i,D}(k)=\bm u^\text{T}_i(k) \bm w(k)$, where $\bm u_i(k)=[u_i(kN),u_i(kN),...,u_i(kN-M+1)]^\text{T}$. By subtracting $y_{i,D}(k)$ from $d_{i,D}(k)$ for $i=0,...,N-1$, the decimated subband error signals used for updating $\bm w(k)$ are obtained:
\begin{equation}
\label{002}
\begin{array}{rcl}
\begin{aligned}
e_{i,D}(k) = d_{i,D}(k) - \bm u^\text{T}_i(k) \bm w(k).
\end{aligned}
\end{array}
\end{equation}
Since $\bm w(k)$ is an estimate of~$\bm w^o$ in the decimated domain~$k$, the main task of the adaptive filter is how to adjust $\bm w(k)$ by using the available signals $\{d_{i,D}(k),u_i(k),e_{i,D}(k)\}_{i=0}^{N-1}$ to reach quickly $\bm w(k)\rightarrow \bm w^o$ as $k$ increases.

To estimate $\bm w^o$ in impulsive noise, the cost function with the band-dependent weighting factor $\lambda_i$ is defined as
\begin{equation}
\label{003}
\begin{aligned}
J(k)=\sum \limits_{i=0}^{N-1}\zeta_i |e_{i,D}(k)|.
\end{aligned}
\end{equation}
By applying the instantaneous gradient descent (IGD) principle to minimize~\eqref{003} with respect to $\bm w(k)$ and letting $\zeta_i=1/\|\bm u_i(k)\|_2$, the weight vector of the IWF-SSAF algorithm is recursively updated~\cite{yu2016novel}:
\begin{equation}
\label{004}
\begin{aligned}
\bm w(k+1)=\bm w(k) + \mu \sum \limits_{i=0}^{N-1} \frac{\text{sgn}(e_{i,D}(k))\bm u_i(k)}{\|\bm u_i(k)\|_2}
\end{aligned}
\end{equation}
where $\mu>0$ is the step-size. Due to the sign function $\text{sgn}(\cdot)$, the IWF-SSAF algorithm is robust against impulsive noise.

\section{Proposed S-IWF-SSAF algorithm}
Based on sparsity-aware techniques~\cite{haddad2014transient,gu2009norm,de2014sparsity,yu2019sparsity,tsp2020yu}, we modify the cost function in~\eqref{003} as
\begin{equation}
\label{0b36}
\begin{aligned}
J_s(k)=\sum \limits_{i=0}^{N-1}\zeta_i |e_{i,D}(k)| + \rho H(\bm w),
\end{aligned}
\end{equation}
where $H(\bm w)$ is the penalty term for favoring the sparsity of $\bm w^o$, and $\rho>0$ is the sparsity penalty parameter associated with this term. By applying the IGD principle to reach a minimization of~\eqref{0b36}, the proposed S-IWF-SSAF algorithm for updating the weight vector is formulated as
\begin{equation}
\label{0b37}
\begin{aligned}
\bm w(k+1)=\bm w(k) + \mu \sum \limits_{i=0}^{N-1} \frac{\text{sgn}(e_{i,D}(k))\bm u_i(k)}{\|\bm u_i(k)\|_2} - \rho H'(\bm w(k)),
\end{aligned}
\end{equation}
where $H'(\bm w) \triangleq \frac{\partial H(\bm w) }{\partial \bm w}$ is the subgradient of $H'(\bm w)$ with respect to $\bm w$, and here the parameter $\rho$ has absorbed $\mu$. Furthermore, \eqref{0b37} can be implemented equivalently by two steps:
\begin{subequations} \label{eq:b38}
    \begin{align}
    \bm \varphi(k+1) &= \bm w(k) + \mu \sum \limits_{i=0}^{N-1} \frac{\text{sgn}(e_{i,D}(k))\bm u_i(k)}{\|\bm u_i(k)\|_2},
    \label{eq:b38a}\\
    \bm w(k+1)&= \bm \varphi(k+1)- \rho H'(\bm \varphi(k+1)).\label{eq:b38b}
    \end{align}
\end{subequations}
In equation~\eqref{eq:b38}, the first step plays the adaptive learning role of the IWF-SSAF algorithm, and the second step drives inactive coefficients in $\bm \varphi(k+1)$ to zero to further improve the estimation performance of $\bm w^o$. Note that, in~\eqref{eq:b38b} we have also made a replacement of $H'(\bm w(k))$ with $H'(\bm \varphi(k+1))$. By doing so, we are able to relatively independently design the adaptive choices of $\mu$ and $\rho$ according to~\eqref{eq:b38a} and~\eqref{eq:b38b}, respectively, which will be discussed in the sequel. Now, another issue is the choice of the penalty term~$H(\bm w)$. In the literature, several strategies on $H(\bm w)$ have been reported to present different sparsity-aware adaptive filtering algorithms, see~\cite{tsp2020yu,gu2009norm,de2014sparsity} and references therein. Here, we do not discuss the effect of these strategies, and choose the popular log-penalty:
\begin{equation}
\label{0b39}
\begin{aligned}
H(\bm \varphi(k+1)) = \sum_{m=1}^{M} \ln(1+|\varphi_m(k+1)|/\xi),
\end{aligned}
\end{equation}
where $\varphi_m(k+1)$ denotes the $m$-th entry of $\bm \varphi(k+1)$, and $\xi>0$ denotes the shrinkage magnitude that distinguish active entries and inactive entries. Accordingly, $H'(\bm \varphi(k+1))$ in~\eqref{eq:b38b} is given by
\begin{equation}
\label{040}
\begin{aligned}
H'(\varphi_m(k+1)) = \frac{\text{sgn}(\varphi_m(k+1))}{\xi + |\varphi_m(k+1)|}, m=1,...,M.
\end{aligned}
\end{equation}
We note that the proposed algorithms can also be considered in the
context of detection problems
\cite{spa,mfsic,gmibd,tdr,mbdf,mmimo,lsmimo,armo,siprec,did,lclrbd,gbd,wlbd,bbprec,mbthp,rmbthp,badstbc,bfidd,1bitidd,1bitce,baplnc,jpbnet,mwc,aaidd,listmtc}
and might be further enhanced by exploitation of low-rank techniques
\cite{intadap,inttvt,jio,ccmmwf,wlmwf,jidf,jidfecho,barc,jiols,jiomimo,jiostap,sjidf,l1stap,saabf,jioccm,ccmrab,wlbeam,lcrab,jiodoa,rrser,rcb,saalt,dce,damdc,locsme,memd,okspme,lrcc,rrdoa,kaesprit,rhomo,sorsvd,corutv,sparsestap,stapcoprime,wlccm,dcdrec,kacs}.

\section{Performance Analysis}
In this section, we study in detail the statistical performance of the S-IWF-SSAF algorithm in the presence of impulsive noise, and in theory illustrate its superiority over the IWF-SSAF algorithm when identifying sparse systems.

Assuming that $\bm w^o$ is time-invariant. Then, from~\eqref{eq:b38a} and~\eqref{eq:b38b} we obtain the following weights error vector recursions:
\begin{subequations} \label{eq:041}
    \begin{align}
    \widetilde{\bm \varphi}(k+1) &= \widetilde{\bm w}(k) - \mu \sum \limits_{i=0}^{N-1} \frac{\text{sgn}(e_{i,D}(k))\bm u_i(k)}{\|\bm u_i(k)\|_2},
    \label{eq:041a}\\
    \widetilde{\bm w}(k+1)&= \widetilde{\bm \varphi}(k+1) + \rho H'(\bm \varphi(k+1)),\label{eq:041b}
    \end{align}
\end{subequations}
where $\widetilde{\bm w}(k) \triangleq \bm w^o - \bm w(k)$ and $\widetilde{\bm \varphi}(k) \triangleq \bm w^o - \bm \varphi(k)$.

Let $\{\bm h_i\}_{i=0}^{N-1}$ be the impulse response of the analysis filter bank $\{H_i(z)\}_{i=0}^{N-1}$, with the length of $L$, then the following relations hold:
\begin{equation}
\label{006}
\begin{array}{rcl}
\begin{aligned}
\bm u_i(k) & =[\bm u(kN),...,\bm u(kN-L+1)] \bm h_i\\
v_{i,D}(k)&=\bm h_i^\text{T} [v(kN),...,v(kN-L+1)]^\text{T}.
\end{aligned}
\end{array}
\end{equation}
Therefore, \eqref{002} can be rearranged as
\begin{equation}
\label{007}
\begin{array}{rcl}
\begin{aligned}
e_{i,D}(k) = e_{i,a}(k) + v_{i,D}(k),
\end{aligned}
\end{array}
\end{equation}
where $e_{i,a}(k) \triangleq \bm u^\text{T}_i(k) \widetilde{\bm w}(k)$ denotes the \textit{a priori} subband error. Equations~\eqref{eq:041}-\eqref{007} will be the starting point to study the mean and mean-square behaviors of the algorithm in the sequel. Moreover, in order to help to analyze the algorithm mathematically, some widely used assumptions are made as follows:

\textit{Assumption 1}: The input vector $\bm u(n)$ is random with zero-mean vector and positive definite autocorrelation matrix $\bm R_u =\text{E}\{\bm u(n)\bm u^\text{T}(n)\}$.

\textit{Assumption 2}: The additive noise $v(n)$ is drawn from the contaminated-Gaussian (CG) process, i.e., $v(n)=v_\text{g}(n) + v_\text{im}(n)$. Specifically, the background noise $v_\text{g}(n)$ is white Gaussian with zero-mean and variance $\sigma_\text{g}^2$. The impulsive noise component $v_\text{im}(n)$ is described by the Bernoulli-Gaussian model $v_\text{im}(n)=b(n) \eta(n)$, where $b(n)$ obeys the Bernoulli distribution that the probability of occurring 1 is $P\{b(n)=1\}=p_r$, and $\eta(n)$ is also zero-mean white Gaussian but with variance $\sigma_\eta^2 =\hbar \sigma_\text{g}^2$, $\hbar \gg 1$. Thus, it is seen that $v(n)$ is non-Gaussian for any $p_r$ excluding two special cases $p_r=0$ and~1~\cite{bershad2008error}

\textit{Assumption 3}: $\widetilde{\bm w}(k)$ is independent of $\bm u_i(k)$ for $i=0,...,N-1$.

Note that, with the relation~\eqref{006}, assumption~1 shows that the $i$-th subband's input vector $\bm u_i(k)$ is also zero-mean and has a positive definite autocorrelation matrix $\bm R_i = \text{E}\{\bm u_i(k)\bm u_i^\text{T}(k)\}$. Assumption 2 is a popular model for analyzing the performance of adaptive filtering algorithms in impulsive noise~\cite{tsp2020yu,bershad2008error}. Assumption 3 is the well-known~\textit{independence assumption} for the performance analysis of adaptive filtering algorithms~\cite{sayed2003fundamentals,yang2019convergence}.

\subsection{Mean behavior}
By imposing the expectations on both sides of~\eqref{eq:041a} and~\eqref{eq:041b}, we have
\begin{subequations} \label{eq:042}
    \begin{align}
    \text{E}\{\widetilde{\bm \varphi}(k+1)\} &= \left( \bm I_M- \mu \sum \limits_{i=0}^{N-1} \Omega_i(k) \text{E}\{\bm A_i(k)\}\right)  \text{E}\{\widetilde{\bm w}(k)\},
    \label{eq:042a}\\
    \text{E}\{\widetilde{\bm w}(k+1)\}&= \text{E}\{\widetilde{\bm \varphi}(k+1)\} + \rho \text{E}\{H'(\bm \varphi(k+1))\}, \label{eq:042b}
    \end{align}
\end{subequations}
where
\begin{equation}
\label{014}
\begin{aligned}
\Omega_i(k) = \sqrt{\frac{2}{\pi}} \left[ \frac{p_r}{\sqrt{\text{E} \{e_{i,D,1}^2(k)\}}} + \frac{1-p_r}{\sqrt{\text{E} \{e_{i,D,2}^2(k)\}}} \right] \neq 0,
\end{aligned}
\end{equation}
\begin{equation}
\label{015}
\begin{aligned}
\text{E} \{e_{i,D,1}^2(k)\} = \text{E} \{e_{i,a}^2(k)\}+ ||\bm h_i||_2^2 (\hbar + 1)\sigma_\text{g}^2,
\end{aligned}
\end{equation}
and
\begin{equation}
\label{016}
\begin{aligned}
\text{E} \{e_{i,D,2}^2(k)\} = \text{E} \{e_{i,a}^2(k)\}+ ||\bm h_i||_2^2 \sigma_\text{g}^2.
\end{aligned}
\end{equation}
Note that, the detailed process of~\eqref{eq:042a} is shown in Appendix~A. Supposing that the algorithm converges, it holds that~$\text{E}\{\widetilde{\bm w}(k+1)\} = \text{E}\{\widetilde{\bm w}(k)\}$ as $k\rightarrow \infty$. Hence, from~\eqref{eq:042a} and~\eqref{eq:042a} we can deduce the following theorem.

\newtheorem{theorem}{Theorem}
\begin{theorem}
    When $k \rightarrow \infty$, it is established that
    \begin{equation}
    \label{044}
    \begin{aligned}
    \text{E}\{\widetilde{\bm w}(\infty)\}&= \frac{\rho}{\mu} \left(\sum \limits_{i=0}^{N-1} \Omega_i(\infty) \text{E}\{\bm A_i(\infty)\}\right)^{-1} \text{E}\{H'(\bm w(\infty))\},
    \end{aligned}
    \end{equation}
    which points out that the S-IWF-SSAF algorithm is biased for estimating a sparse vector $\bm w^o$.
\end{theorem}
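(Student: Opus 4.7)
The plan is to derive the fixed point of the coupled mean recursions~\eqref{eq:042a}--\eqref{eq:042b} under the stated steady-state hypothesis. First I would substitute~\eqref{eq:042a} into~\eqref{eq:042b} to eliminate $\text{E}\{\widetilde{\bm\varphi}(k+1)\}$, producing a single one-step recursion for the mean weight-error vector
\begin{equation*}
\text{E}\{\widetilde{\bm w}(k+1)\} = \Bigl(\bm I_M - \mu \sum_{i=0}^{N-1}\Omega_i(k)\text{E}\{\bm A_i(k)\}\Bigr)\text{E}\{\widetilde{\bm w}(k)\} + \rho\,\text{E}\{H'(\bm\varphi(k+1))\}.
\end{equation*}
This casts the analysis in the familiar form of a linear time-varying update driven by a sparsity-induced forcing term.

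Next I would pass to the limit $k\to\infty$ and invoke the convergence hypothesis $\text{E}\{\widetilde{\bm w}(k+1)\}=\text{E}\{\widetilde{\bm w}(k)\}$ to cancel the common $\text{E}\{\widetilde{\bm w}(\infty)\}$ appearing on the two sides. What remains is the algebraic balance
\begin{equation*}
\mu\sum_{i=0}^{N-1}\Omega_i(\infty)\text{E}\{\bm A_i(\infty)\}\,\text{E}\{\widetilde{\bm w}(\infty)\} = \rho\,\text{E}\{H'(\bm\varphi(\infty))\},
\end{equation*}
which is exactly the statement that the adaptive drift from step~\eqref{eq:b38a} must cancel the shrinkage from step~\eqref{eq:b38b} in the mean. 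Left-multiplying by $\bigl(\sum_i\Omega_i(\infty)\text{E}\{\bm A_i(\infty)\}\bigr)^{-1}$ then yields the right-hand side of~\eqref{044}, up to the replacement of $\bm\varphi(\infty)$ by $\bm w(\infty)$ inside $H'(\cdot)$. The invertibility of the sum is easy to justify: from~\eqref{014} each $\Omega_i(\infty)>0$, while $\text{E}\{\bm A_i(\infty)\}$ inherits positive-definiteness from the autocorrelation $\bm R_i$ under Assumption~1, so their positively weighted sum is positive definite.

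The remaining step swaps $\bm\varphi(\infty)$ for $\bm w(\infty)$ inside the subgradient. From~\eqref{eq:b38b} the two vectors differ only by the order-$\rho$ shrinkage term $\rho H'(\bm\varphi(\infty))$, and $H'$ defined by~\eqref{040} is bounded by $1/\xi$, so to leading order in $\rho$ one has $\text{E}\{H'(\bm\varphi(\infty))\}\approx\text{E}\{H'(\bm w(\infty))\}$; this is the same small-$\rho$ identification already implicit in the two-step splitting introduced right after~\eqref{0b37}. The bias conclusion then follows at once: whenever $\rho>0$ and $\bm w^o$ has any nonzero entries, $\text{E}\{H'(\bm w(\infty))\}$ is a nonzero vector, so the right-hand side of~\eqref{044} cannot vanish and $\text{E}\{\bm w(\infty)\}\neq\bm w^o$. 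The main obstacle I anticipate is rigorously justifying this $\bm\varphi\leftrightarrow\bm w$ substitution rather than the algebraic rearrangement itself, because $H'$ is discontinuous at zero through $\text{sgn}(\varphi_m)$; one must therefore either appeal to a formal small-$\rho$ expansion or to the interpretation that the shrinkage step has already aligned the supports of $\bm\varphi$ and $\bm w$ in steady state.
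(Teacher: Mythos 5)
Your proposal is correct and follows essentially the same route as the paper: substitute \eqref{eq:042a} into \eqref{eq:042b}, impose the steady-state condition $\text{E}\{\widetilde{\bm w}(k+1)\}=\text{E}\{\widetilde{\bm w}(k)\}$, and solve the resulting balance equation by inverting $\sum_i\Omega_i(\infty)\text{E}\{\bm A_i(\infty)\}$. You are in fact more careful than the paper on two points it silently elides --- the invertibility of that sum and the replacement of $\bm\varphi(\infty)$ by $\bm w(\infty)$ inside $H'(\cdot)$ (which the paper performs without comment in passing from \eqref{eq:042b} to \eqref{044}) --- so no gap remains.
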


\textit{Remark~1:} For a special case of~$\rho=0$, \eqref{044} leads to
\begin{equation}
\label{017}
\begin{aligned}
\text{E}\{\widetilde{\bm w}(\infty)\} = \bm 0,
\end{aligned}
\end{equation}
that is, the IWF-SSAF algorithm is unbiased for estimating $\bm w^o$ in the presence of impulsive noise.

It is worth noting that the mean transient behavior of both IWF-SSAF and S-IWF-SSAF algorithms relies on its mean-square behavior as we shall discuss below.

\subsection{Mean-square behavior}
Let us define the autocorrelation matrices of the weight error vectors $\widetilde{\bm w}(k)$ and $\widetilde{\bm \varphi}(k)$, as follows, $\widetilde{\bm W}(k) \triangleq \text{E}\{\widetilde{\bm w}(k)\widetilde{\bm w}^\text{T}(k)\}$ and $\widetilde{\bm \varPhi}(k) \triangleq \text{E}\{\widetilde{\bm \varphi}(k) \widetilde{\bm \varphi}^\text{T}(k)\}$. Thus, by equating the autocorrelation matrices for both sides of~\eqref{eq:041a} and~\eqref{eq:041b} respectively, we have the following recursions:
\begin{subequations} \label{eq:043}
    \begin{equation}
    \begin{array}{rcl}
    \begin{aligned}
    \label{eq:043a}
    \widetilde{\bm \varPhi}(k+1) =& \widetilde{\bm W}(k) + \mu^2 \sum \limits_{i=0}^{N-1} \text{E}\{\check{\bm A}_i(k)\}- \\
    &\mu \widetilde{\bm W}(k) \sum \limits_{i=0}^{N-1} \Omega_i(k) \text{E}\{\bm A_i(k)\} - \\
    &\mu \sum \limits_{i=0}^{N-1} \Omega_i(k) \text{E}\{\bm A_i(k)\} \widetilde{\bm W}(k),
    \end{aligned}
    \end{array}
    \end{equation}
    \begin{equation}
    \begin{array}{rcl}
    \begin{aligned}
    \label{eq:043b}
    \widetilde{\bm W}(k+1) =& \widetilde{\bm \varPhi}(k+1) + \rho \bm \varTheta(k+1) +  \\
    &\rho \bm \varTheta^\text{T}(k+1) + \rho^2 \bm \varXi(k+1),
    \end{aligned}
    \end{array}
    \end{equation}\\
\end{subequations}
where $\bm \varTheta(k+1) \triangleq \text{E}\{H'(\bm \varphi(k+1)) \widetilde{\bm \varphi}^\text{T}(k+1)\}$ and $\bm \varXi(k+1) \triangleq \text{E}\{H'(\bm \varphi(k+1)) {H'}^\text{T}(\bm \varphi(k+1))\}$. Note that, the derivation of~\eqref{eq:043a} is given in~Appendix~B. Obviously, the mean model in~\eqref{eq:042b} and the mean-square model in~\eqref{eq:043b} require computing $\text{E}\{H'(\bm \varphi(k+1))\}$, $\bm \varTheta(k+1)$, and $\bm \varXi(k+1)$ beforehand in a component-wise way, as shown in Appendix~C. Furthermore, to implement the recursion in~\eqref{eq:043a} conveniently, $\Omega_i(k)$ given by~\eqref{014} is rewritten as
\begin{equation}
\label{022}
\begin{aligned}
\Omega_i(k) =& \sqrt{\frac{2}{\pi}} \left[ \frac{p_r}{\sqrt{\text{Tr} \{ \widetilde{\bm W}(k)\bm R_i\} + ||\bm h_i||_2^2 (\hbar + 1)\sigma_\text{g}^2 }} + \right. \\
&\left. \frac{1-p_r}{\sqrt{\text{Tr} \{ \widetilde{\bm W}(k)\bm R_i\} + ||\bm h_i||_2^2 \sigma_\text{g}^2 }} \right].
\end{aligned}
\end{equation}

The MSD is defined as $\text{MSD}(k)\triangleq \text{E}\{||\widetilde{\bm w}(k)||_2^2\} = \text{Tr}\{\widetilde{\bm W}(k)\}$~\cite{sayed2003fundamentals}. Consequently, the model in~\eqref{eq:043} describes the transient MSD behavior of the S-IWF-SSAF algorithm in impulsive noise.

To continue the steady-state analysis, we impose the vectorization operation on both sides of~\eqref{eq:043a} and use the Kronecker property that $\text{vec}(\bm X\bm \Sigma \bm Y)=(\bm Y^\text{T} \otimes \bm X) \text{vec}(\bm \Sigma)$ for matrices $\bm X$, $\bm Y$, and $\bm \Sigma$ of compatible sizes\cite{graham2018kronecker}, the following recursion is established:
\begin{equation}
\label{023}
\begin{aligned}
\text{vec}(\widetilde{\bm \varPhi}(k+1)) = \bm F_k \text{vec}(\widetilde{\bm W}(k)) + \mu^2 \text{vec}\left( \sum \limits_{i=0}^{N-1} \text{E}\{\check{\bm A}_i(k)\}\right),
\end{aligned}
\end{equation}
where
\begin{equation}
\label{024}
\begin{aligned}
\bm F_k =& \bm I_{M^2} - \mu \left(\sum \limits_{i=0}^{N-1} \Omega_i(k) \text{E}\{\bm A_i(k)\} \otimes \bm I_M \right)- \\
&\mu \left(\bm I_M \otimes \sum \limits_{i=0}^{N-1} \Omega_i(k) \text{E}\{\bm A_i(k)\} \right).
\end{aligned}
\end{equation}

At the steady-state, it holds that ${\widetilde{\bm W}(k+1)} = \widetilde{\bm W}(k)$, $k \rightarrow \infty$. Therefore, by assuming the existence of $(\bm I_{M^2} - \bm F_\infty)^{-1}$ and applying the operators~$\text{vec}(\cdot)^{-1}$ and $\text{Tr}(\bm X \bm Y)= \text{vec}^\text{T}(\bm X^\text{T}) \text{vec}(\bm Y)$, the steady-state MSD of the S-IWF-SSAF algorithm can be derived from~\eqref{eq:043b} and~\eqref{023} that
\begin{equation}
\label{045}
\begin{array}{rcl}
\begin{aligned}
&\text{MSD}_s(\infty) = \text{MSD}(\infty) +\Delta_s(\infty),
\end{aligned}
\end{array}
\end{equation}
where
\begin{equation}
\label{046}
\begin{array}{rcl}
\begin{aligned}
&\text{MSD}(\infty) =\\
&\;\mu^2 \text{vec}^\text{T}(\bm I_M) (\bm I_{M^2} - \bm F_\infty)^{-1} \text{vec}\left( \sum \limits_{i=0}^{N-1} \text{E}\{\check{\bm A}_i(k)\}\right)
\end{aligned}
\end{array}
\end{equation}
is the result of the IWF-SSAF update~\eqref{eq:b38a} and
\begin{equation}
\label{046x}
\begin{array}{rcl}
\begin{aligned}
\Delta_s(\infty) = &\text{vec}^\text{T}(\bm I_M) (\bm I_{M^2} - \bm F_\infty)^{-1} \times \\
&\text{vec} \left( \rho \bm \varTheta(\infty) + \rho \bm \varTheta^\text{T}(\infty) + \rho^2 \bm \varXi(\infty) \right)
\end{aligned}
\end{array}
\end{equation}
is the result of the sparsity-aware step~\eqref{eq:b38b}.

It is stressed that~\eqref{045} is not a closed-form, as it is self-contained through $\Omega_i(\infty)$. Thus, in terms of $\text{MSD}_s(\infty)$, we may take advantage of some numerical approaches to solve~\eqref{045} or take the MSD value by running~\eqref{eq:043} to the steady-state. From~\eqref{045}, the following theorem can be obtained.
\begin{theorem}
In sparse system scenarios, the steady-state performance of the S-IWF-SSAF algorithm would be superior to that of the IWF-SSAF algorithm, if and only if $\Delta_s(\infty)<0$. Interestingly, the possibility of $\Delta_s(\infty)<0$ is true by choosing~$\rho$ in a range~$0<\rho<\rho_\text{up}$.
\end{theorem}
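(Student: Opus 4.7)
The equivalence in the first sentence is immediate from~\eqref{045}: subtracting $\text{MSD}(\infty)$ from both sides gives $\text{MSD}_s(\infty)-\text{MSD}(\infty)=\Delta_s(\infty)$, so S-IWF-SSAF attains a strictly smaller steady-state MSD than IWF-SSAF exactly when $\Delta_s(\infty)<0$. The non-trivial claim is the existence of the interval $(0,\rho_\text{up})$, and my plan is to exhibit $\Delta_s(\infty)$ as a perturbation of $0$ in $\rho$ and show that the leading order is negative under sparsity.

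Explicitly, I collect the powers of $\rho$ in~\eqref{046x} as
\begin{equation*}
\Delta_s(\infty)=\rho\,c_1(\rho)+\rho^2\,c_2(\rho),
\end{equation*}
\begin{align*}
c_1(\rho)&=\text{vec}^\text{T}(\bm I_M)(\bm I_{M^2}-\bm F_\infty)^{-1}\text{vec}\bigl(\bm\varTheta(\infty)+\bm\varTheta^\text{T}(\infty)\bigr),\\
c_2(\rho)&=\text{vec}^\text{T}(\bm I_M)(\bm I_{M^2}-\bm F_\infty)^{-1}\text{vec}\bigl(\bm\varXi(\infty)\bigr),
\end{align*}
with the implicit $\rho$-dependence sitting inside $\bm F_\infty$, $\bm\varTheta(\infty)$, $\bm\varXi(\infty)$ through the fixed point $\bm\varphi(\infty)$. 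Because $\bm\varXi(\infty)$ is positive semidefinite and, via~\eqref{024}, $(\bm I_{M^2}-\bm F_\infty)^{-1}$ is the inverse of a Kronecker-sum Lyapunov operator associated with the convergent covariance recursion~\eqref{eq:043a}, its action sends $\text{vec}$ of a PSD matrix to $\text{vec}$ of a PSD matrix, so $c_2(\rho)\geq 0$ and stays bounded as $\rho\to 0^+$. By continuity of $c_1$ and $c_2$ at $\rho=0$, it therefore suffices to prove $c_1(0)<0$: the sign of $\Delta_s(\infty)$ in a right-neighbourhood of $0$ is then dictated by $c_1(0)$, with $\rho_\text{up}\approx -c_1(0)/c_2(0)$ to leading order and $\rho_\text{up}$ itself implicitly defined by the smallest positive root of $\rho c_1(\rho)+\rho^2 c_2(\rho)=0$.

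To analyze $c_1(0)$ I use $\text{Tr}(\bm X\bm Y)=\text{vec}^\text{T}(\bm X^\text{T})\text{vec}(\bm Y)$ together with the Kronecker-sum form $\bm I_{M^2}-\bm F_\infty=\bm A\otimes\bm I_M+\bm I_M\otimes\bm A$, where $\bm A=\mu\sum_i\Omega_i(\infty)\text{E}\{\bm A_i(\infty)\}$ is evaluated under the IWF-SSAF dynamics. This rewrites $c_1(0)=\text{Tr}(\bm Y_0)$ with $\bm Y_0$ the solution of the symmetric Lyapunov equation $\bm A\bm Y_0+\bm Y_0\bm A=\bm\varTheta(\infty)+\bm\varTheta^\text{T}(\infty)$. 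By Remark~1 the IWF-SSAF estimator is unbiased, so at $\rho=0$ one has $\varphi_m(\infty)=w^o_m+\delta_m$ with zero-mean fluctuations $\delta_m$. The diagonal of $\bm\varTheta(\infty)$ then splits as
\begin{equation*}
\text{E}\{H'(\varphi_m)\widetilde{\varphi}_m\}=
\begin{cases}
-\text{E}\!\left\{\dfrac{|\delta_m|}{\xi+|\delta_m|}\right\}<0, & w^o_m=0,\\[4pt]
+O\!\left(\dfrac{\text{E}\{\delta_m^2\}}{(\xi+|w^o_m|)^2}\right), & w^o_m\neq 0,
\end{cases}
\end{equation*}
the active case coming from a second-order Taylor expansion of $1/(\xi+|w^o_m+\delta_m|)$ about $\delta_m=0$. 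When $\bm w^o$ is sparse the inactive entries dominate, so $\text{Tr}(\bm\varTheta(\infty))<0$; because $\bm A\succ\bm 0$ the Lyapunov solution admits the integral representation $\bm Y_0=\int_0^\infty e^{-\bm A t}(\bm\varTheta+\bm\varTheta^\text{T})e^{-\bm A t}\,dt$, and I would use this to transfer the negative trace from the right-hand side to $\bm Y_0$, delivering $c_1(0)<0$.

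The main obstacle is precisely this last transfer step: the trace of the Lyapunov solution is not automatically sign-preserved when the right-hand side is indefinite, so I would decompose $\bm\varTheta+\bm\varTheta^\text{T}=\bm S_-+\bm S_+$ into its negative/positive-semidefinite parts, apply the integral formula to each, and quantitatively bound $\text{Tr}(G(\bm S_+))$ in terms of the active-to-inactive support ratio and the spectral spread of $\bm A$; sparsity of $\bm w^o$ is then what guarantees that $\text{Tr}(G(\bm S_-))$ dominates. The residual self-consistency in $\rho$ (through $\bm F_\infty$, $\bm\varTheta$, $\bm\varXi$) is a milder difficulty that I would settle purely by continuity near $\rho=0$ rather than pursue a closed-form $\rho_\text{up}$.
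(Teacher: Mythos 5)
Your reduction of the first claim to~\eqref{045} is exactly right, and your observation that $c_2(\rho)\ge 0$ (PSD right-hand sides map to PSD Lyapunov solutions under the Kronecker-sum operator) is sound. But the proof is not complete at the one place that matters: you need $c_1(0)<0$, and you yourself identify that the trace of the solution of $\bm A\bm Y_0+\bm Y_0\bm A=\bm\varTheta+\bm\varTheta^{\text{T}}$ is \emph{not} controlled by the sign of $\text{Tr}(\bm\varTheta+\bm\varTheta^{\text{T}})$ when the right-hand side is indefinite --- the integral representation weights the positive and negative parts by different factors of $e^{-\bm A t}(\cdot)e^{-\bm A t}$, so a negative trace on the right can be flipped if the positive-semidefinite part aligns with the slow eigendirections of $\bm A$. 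Your proposed remedy (a quantitative bound on $\text{Tr}(G(\bm S_+))$ in terms of the active-to-inactive support ratio and the spectral spread of $\bm A$) is a plan, not an argument; as written the chain breaks exactly here.

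The paper does not attempt this transfer at all. In Appendix~D it approximates $(\bm I_{M^2}-\bm F_\infty)\approx t\,\bm I_{M^2}$ with $t>0$ finite, which collapses the quadratic form to $\frac{1}{t}\text{Tr}(\cdot)$ and turns $\Delta_s(\infty)$ into the scalar quadratic $\frac{\rho}{t}\,\text{E}\{\|H'\|_2^2\}\left[\rho+2\,\text{E}\{\widetilde{\bm\varphi}^{\text{T}}H'\}/\text{E}\{\|H'\|_2^2\}\right]$; the condition $0<\rho<\rho_\text{up}$ with $\rho_\text{up}=-2\,\text{E}\{\widetilde{\bm\varphi}^{\text{T}}H'\}/\text{E}\{\|H'\|_2^2\}$ then drops out exactly, with no leading-order expansion in $\rho$ needed. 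The negativity of $\text{E}\{\widetilde{\bm\varphi}^{\text{T}}(\infty)H'(\bm\varphi(\infty))\}$ is obtained not from your unbiasedness-plus-Taylor argument but from the subgradient inequality for the convex penalty, $(\bm w^o-\bm\varphi)^{\text{T}}H'(\bm\varphi)\le H(\bm w^o)-H(\bm\varphi)$, combined with the sparsity claim $H(\bm w^o)<H(\bm\varphi(\infty))$. Your component-wise computation (exactly negative on the zero support, $+O(\text{E}\{\delta_m^2\})$ on the active support) is a finer and arguably more informative version of that step, but it only delivers $\text{Tr}(\bm\varTheta(\infty))<0$, which your route then still needs to push through the Lyapunov operator. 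Either adopt the paper's scalar approximation of $(\bm I_{M^2}-\bm F_\infty)^{-1}$ (accepting it as a stated approximation, as the paper does) or supply the missing quantitative bound; without one of the two, the existence of the interval $(0,\rho_\text{up})$ is not established.
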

\begin{proof}
    See Appendix~D.
\end{proof}

\textit{Remark~2 (on the special IWF-SSAF algorithm):} For evaluating the steady-state of the IWF-SSAF algorithm using a small step-size, i.e., \eqref{046}, it can be assumed that~$\text{Tr}(\widetilde{\bm W}(\infty) \bm R_i) \ll ||\bm h_i||_2^2 \sigma_\text{g}^2$. As such, we can make the following approximation for~$\Omega_i(\infty)$:
\begin{equation}
\label{028}
\begin{aligned}
\Omega_i(\infty) \approx \sqrt{\frac{2}{\pi}} \left[ \frac{p_r}{\sqrt{||\bm h_i||_2^2 (\hbar + 1)\sigma_\text{g}^2}} + \frac{1-p_r}{\sqrt{||\bm h_i||_2^2 \sigma_\text{g}^2 }} \right],
\end{aligned}
\end{equation}
which contributes a closed-form of $\text{MSD}(\infty)$. On the other hand, when the number of subbands is large enough, the decimated input signal at each subband can be approximately white~\cite{lee2009subband}. Also, the length of the adaptive filter is required to be long. In this case, it is known from~\cite{yu2016novel} that $\text{E}\{\bm A_i(k)\} \approx \frac{\sigma_{u,i}}{\sqrt{M}}\bm I_M$ and $\text{E}\{\check{\bm A}_i(k)\} \approx \frac{1}{M}\bm I_M$, where $\sigma_{u,i}^2$ is the power of the decimated input signal of the $i$-th subband. Plugging them into~\eqref{046} produces
\begin{equation}
\label{028after}
\begin{aligned}
\text{MSD}_\text{white}(\infty) = \frac{\mu N \sqrt{M}}{2\sum \limits_{i=0}^{N-1} \Omega_i(\infty) \sigma_{u,i}}.
\end{aligned}
\end{equation}
It is seen from \eqref{028after} that the steady-state MSD of the IWF-SSAF algorithm depends on the step size $\mu$, the number of subbands $N$, the filter's length $M$, the background noise variance $\sigma_\text{g}^2$, and the subband input power $\sigma_{u,i}^2$. Specifically, the steady-state MSD becomes large as $\mu$ and $N$ increase. Conversely, increasing $\mu$ and $N$ brings also fast convergence of the algorithm. Hence, this would motivate the improvements of the algorithm in performance by optimizing~$\mu$ and/or~$N$.

\subsection{Stability condition}
Since the elements of $\text{E}\{H'(\bm \varphi(k+1))\}$, $\bm \varTheta(k+1)$, and $\bm \varXi(k+1)$ are bounded, the stability condition of the S-IWF-SSAF algorithm is the same as that of the IWF-SSAF algorithm. As a result, we study the stability condition according to~\eqref{eq:043a}, and then take the traces of all the terms to yield
\begin{equation}
\label{033}
\begin{aligned}
\text{MSD}(k+1) =& \text{MSD}(k) - \Delta(k)
\end{aligned}
\end{equation}
where
\begin{equation}
\label{034}
\begin{aligned}
\Delta(k) =& \mu  \sum \limits_{i=0}^{N-1} \Omega_i(k) \text{Tr}\left( \widetilde{\bm W}(k) \text{E}\{\bm A_i(k)\} \right)  \\
&+\mu \sum \limits_{i=0}^{N-1} \Omega_i(k) \text{Tr}\left(  \text{E}\{\bm A_i(k)\} \widetilde{\bm W}(k) \right)  \\
&-\mu^2 N.
\end{aligned}
\end{equation}

Recalling large $M$ and $N$ assumptions so that $\text{E}\{\bm A_i(k)\}\approx \frac{\sigma_{u,i}}{\sqrt{M}} \bm I_M $ and $\bm R_i \approx \sigma_{u,i}^2 \bm I_M $, thereby $\Delta(k)$ becomes
\begin{equation}
\label{035}
\begin{aligned}
\Delta(k) =2\mu  \sum \limits_{i=0}^{N-1} \Omega_i(k) \frac{\sigma_{u,i}}{\sqrt{M}} \text{MSD}(k) - \mu^2 N,
\end{aligned}
\end{equation}
where
\begin{equation}
\label{036}
\begin{aligned}
\Omega_i(k) =& \sqrt{\frac{2}{\pi}} \left[ \frac{p_r}{\sqrt{\sigma_{u,i}^2 \text{MSD}(k) + ||\bm h_i||_2^2 (\hbar + 1)\sigma_\text{g}^2 }} + \right. \\
&\left. \frac{1-p_r}{\sqrt{\sigma_{u,i}^2 \text{MSD}(k) + ||\bm h_i||_2^2 \sigma_\text{g}^2 }} \right].
\end{aligned}
\end{equation}

Equation~\eqref{033} illustrates that the algorithm converges in the mean-square sense, if and only if $\Delta(k)>0$, which further leads to
\begin{equation}
\label{037}
\begin{aligned}
0 <\mu < \frac{2}{N} \sum \limits_{i=0}^{N-1} \Omega_i(k) \frac{\sigma_{u,i}}{\sqrt{M}} \text{MSD}(k),
\end{aligned}
\end{equation}

It is known from~\eqref{036} that $\Omega_i(k)$ gradually increases during the convergence of the algorithm, that is, its value is the minimum value at the initial iteration~$k=0$. Accordingly, as~$k$ increases, the upper bound of~\eqref{037} increases. The convergence condition of the algorithm is developed:
\begin{equation}
\label{038}
\begin{aligned}
0 <\mu < \frac{2}{N} \sum \limits_{i=0}^{N-1} \Omega_{i,\min} \frac{\sigma_{u,i}}{\sqrt{M}} ||\bm w^o||_2^2,
\end{aligned}
\end{equation}
where
\begin{equation}
\label{039}
\begin{aligned}
\Omega_{i,\min} \triangleq & \Omega_i(0) \\
=& \sqrt{\frac{2}{\pi}} \left[ \frac{p_r}{\sqrt{\sigma_{u,i}^2 ||\bm w^o||_2^2 + ||\bm h_i||_2^2 (\hbar + 1)\sigma_\text{g}^2 }} + \right. \\
&\left. \frac{1-p_r}{\sqrt{\sigma_{u,i}^2 ||\bm w^o||_2^2 + ||\bm h_i||_2^2 \sigma_\text{g}^2 }} \right]
\end{aligned}
\end{equation}
due to $\bm w(0)=\bm 0$. Because $p_r\ll 1$ and $\hbar \gg 1$ in most cases, the first term at the right side of~\eqref{039} is negligible as compared to the second one. Moreover, the roles of $||\bm w^o||_2^2$ are opposite roughly in~\eqref{038} and $\Omega_{i,\min}$ so that the effect of $||\bm w^o||_2^2$ on the upper bound of~\eqref{038} is small when $||\bm w^o||_2^2$ is not far away~1, which can be seen in Fig.~\ref{Fig2}. Based on the above reasons, we can obtain an effective range of values for choosing~$\mu$ in the algorithm:
\begin{equation}
\label{040x1}
\begin{aligned}
0 <\mu < \frac{2}{N} \sum \limits_{i=0}^{N-1} \Omega_{i,\min} \frac{\sigma_{u,i}}{\sqrt{M}},
\end{aligned}
\end{equation}
where
\begin{equation}
\label{040x2}
\begin{aligned}
\Omega_{i,\min} &\stackrel{(a)}{\approx}\sqrt{\frac{2}{\pi}} \frac{1-p_r}{\sqrt{\sigma_{u,i}^2 + ||\bm h_i||_2^2 \sigma_\text{g}^2 }}\\
&\stackrel{(b)}{\approx} \sqrt{\frac{2}{\pi}} \frac{1}{\sqrt{\sigma_{u,i}^2 + ||\bm h_i||_2^2 \sigma_\text{g}^2 }}.
\end{aligned}
\end{equation}
The approximation~$(b)$ in~\eqref{040x2} corresponds the case of no impulsive noise. Although the approximation~$(a)$ is more accurate than the approximation~$(b)$, the latter is more practical on guiding the choice of the step size as it does not require \emph{a priori} information of $p_r$ when $p_r\ll 1$.

c\section{Variable parameters improvements for the S-IWF-SSAF algorithm}
As stated in theorem~2 and remark~2, the S-IWF-SSAF algorithm can be further improved by jointly developing VSS mechanism and adaptation of $\rho$. In this section, we will arrive at this goal.
\subsection{VSS mechanism}
{ By using time-varying and band-dependent step-sizes~$\mu_i(k)$,
$i=0,...,N-1$ to replace $\mu$ in~\eqref{eq:041a}, it} yields
\begin{equation}
\label{047}
\begin{array}{rcl}
\begin{aligned}
\widetilde{\bm \varphi}(k+1) = \widetilde{\bm w}(k) - \sum \limits_{i=0}^{N-1} \mu_i(k) \frac{\text{sgn}(e_{i,D}(k))\bm u_i(k)}{\|\bm u_i(k)\|_2}.
\end{aligned}
\end{array}
\end{equation}
By defining the $i$-th subband intermediate~\emph{a posteriori} error,
\begin{equation}
\label{048}
\begin{array}{rcl}
\begin{aligned}
e_{p,i}(k) = d_{i,D}(k) - \bm u_i^{T}(k) \widetilde{\bm \varphi}(k+1),
\end{aligned}
\end{array}
\end{equation}
and then from~\eqref{047} we are capable of obtaining
\begin{equation}
\label{049}
\begin{array}{rcl}
\begin{aligned}
e_{p,i}(k) = e_{i,D}(k) - \mu_i(k) ||\bm u_i(k)||_2 \text{sgn}(e_{i,D}(k)).
\end{aligned}
\end{array}
\end{equation}

Then, the VSS~$\mu_i(k)$ for $i=0,...,N-1$ can be derived from the following minimization:
\begin{equation}
\label{050}
\begin{array}{rcl}
\begin{aligned}
\min \limits_{\mu_i(k)} ||e_{p,i}(k)||_2^2,
\end{aligned}
\end{array}
\end{equation}
namely, we have
\begin{equation}
\label{051}
\begin{array}{rcl}
\begin{aligned}
\mu_i(k) = \frac{|e_{i,D}(k)|}{||\bm u_i(k)||_2}.
\end{aligned}
\end{array}
\end{equation}
In the light of the algorithm's stability, values of $\mu_i(k)$ must be bounded as follows:
\begin{equation}
\label{052}
\mu_i(k) = \left\{ \begin{aligned}
&\mu_\text{max}, \text{ if } \mu_i(k) > \mu_\text{max} \\
&\mu_\text{min}, \text{ if } \mu_i(k) < \mu_\text{max}\\
&\mu_i(k), \text{otherwise},
\end{aligned} \right.
\end{equation}
where $\mu_\text{min}$ and $\mu_\text{max}$ denote the lower and upper bounds for $\mu_i(k)$, respectively.
It is worth noting that $\mu_\text{max}$ can be chosen by $\mu_\text{max}=\sqrt{\sigma_d^2/(M\sigma_u^2)}$ to guarantee the stability of the algorithm, where $\sigma_d^2$ and $\sigma_u^2$ denote the powers of $d(n)$ and $u(n)$, respectively, and $\mu_\text{min}$ is selected to be close to zero (e.g., $10^{-5}$).

Furthermore, during the convergence of the algorithm, when the impulsive noise appears, $\mu_i(k)$ will increase immediately to $\mu_\text{max}$, and finally degrading the convergence performance. To avoid this shortcoming, the proposed VSS is revised based on an exponential window strategy as
\begin{equation}
\label{053}
\begin{array}{rcl}
\begin{aligned}
\mu_{o,i}(k) = \beta \mu_{o,i}(k-1)  + (1-\beta) \min\{\mu_i(k),\;\mu_{o,i}(k-1)\},
\end{aligned}
\end{array}
\end{equation}
where the exponential window factor $\beta$ is in general chosen by $\beta=1-N/(\tau M)$ with $\tau \geq 1$, and the initial step-size $\mu_{o,i}(0)$ equals to $\mu_\text{max}$.

\subsection{Adaptation of $\rho$}
To derive the adaptation of~$\rho$, we use~$\rho(k)$ to rewrite~\eqref{eq:041b}~as
\begin{equation}
\label{054}
\begin{array}{rcl}
\begin{aligned}
\widetilde{\bm w}(k+1) = \widetilde{\bm \varphi}(k+1) + \rho(k) H'(\bm \varphi(k+1)).
\end{aligned}
\end{array}
\end{equation}
Taking the squared $l_2$-norm for both sides of~\eqref{054}, we arrive at the following equation
\begin{equation}
\label{055}
\begin{array}{rcl}
\begin{aligned}
||\widetilde{\bm w}(k+1)||_2^2 =& ||\widetilde{\bm \varphi}(k+1)||_2^2 +  \\
&\rho(k) \widetilde{\bm \varphi}^\text{T}(k+1) H'(\bm \varphi(k+1)) +  \\
&\rho^2(k) ||H'(\bm \varphi(k+1))||_2^2.
\end{aligned}
\end{array}
\end{equation}
Letting $||\widetilde{\bm w}(k+1)||_2^2$ be minimum with respect to $\rho(k)$, the optimal $\rho(k)$ is formulated as
\begin{equation}
\label{056}
\begin{array}{rcl}
\begin{aligned}
\rho_o(k) = -\frac{\widetilde{\bm \varphi}^\text{T}(k+1) H'(\bm \varphi(k+1))}{||H'(\bm \varphi(k+1))||_2^2}. \\
\end{aligned}
\end{array}
\end{equation}
Applying~\eqref{0D4} at any iteration~$k$ into~\eqref{056}, we get
\begin{equation}
\label{057}
\begin{array}{rcl}
\begin{aligned}
\rho_o(k) \geq \frac{H(\bm \varphi(k+1)) - H(\bm w^o)}{||H'(\bm \varphi(k+1))||_2^2}. \\
\end{aligned}
\end{array}
\end{equation}
Equation~\eqref{057} is not practical owing mainly to requiring the \emph{a~priori} sparsity $H(\bm w^o)$.
To estimate this sparsity, our previous work~\cite{yu2019sparsity} is employed to yield an estimate of $\bm w^o$:
\begin{equation}
\label{058}
\begin{array}{rcl}
\begin{aligned}
&\text{if}\;k==0\\
&\;\;\;\;\hat{\bm w} = \bm \varphi(k+1)\\
&\text{else} \\
&\;\;\;\;\hat{\bm w} = 0.5\hat{\bm w} + 0.5\bm \varphi(k+1)\\
&\text{end}.
\end{aligned}
\end{array}
\end{equation}
Note that, using~$\hat{\bm w}$ would lead to $H(\bm \varphi(k+1)) - H(\hat{\bm w}) < 0$ so that $\rho_o(k)<0$. As such, based on this consideration, from~\eqref{057} we propose the following rule for adjusting~$\rho_o(k)$:
\begin{equation}
\label{059}
\begin{array}{rcl}
\begin{aligned}
\rho_o(k) = \chi \frac{\max[H(\bm \varphi(k+1)) - H(\hat{\bm w}),\; 0]}{||H'(\bm \varphi(k+1))||_2^2}, \\
\end{aligned}
\end{array}
\end{equation}
where the parameter~$\chi$ results from the inequality sign in~\eqref{057} and based on extensive simulation results, we found out that $1\leq \chi \leq2$ works well.

As a result, by equipping the S-IWF-SSAF algorithm with $\mu_{o,i}(k)$ and $\rho_o(k)$, we refer to it as the variable parameters S-IWF-SSAF (VP-S-IWF-SSAF) algorithm and summarize in Table~\ref{table_2}.
\begin{table}[htbp]
    \scriptsize
    \centering
    \vspace{-1em}
    \caption{Proposed VP-S-IWF-SSAF algorithm.}
    \label{table_2}
    \begin{tabular}{lc}
        \hline
        \text{Initializations:} $\bm w(k) = \bm 0$, $\mu_{o,i}(0)=\mu_\text{max}$;\\
        \text{Parameters:} $1\leq \chi \leq2$;\\
        \;\;\;\;\;\;\;\;\;\;\;\;\;\;\;\;\;\; $\mu_\text{max}=\sqrt{\sigma_d^2/(M\sigma_u^2)}$, large step size; \\
        \;\;\;\;\;\;\;\;\;\;\;\;\;\;\;\;\;\; $\mu_{\min}>0$, very small step size, e.g., $10^{-5}$;\\
        \;\;\;\;\;\;\;\;\;\;\;\;\;\;\;\;\;\; $\delta>0$, very small number to avoid the division by zero;\\
        \;\;\;\;\;\;\;\;\;\;\;\;\;\;\;\;\;\; $\beta=1-N/(\tau M)$, exponential weighted factor, with $\tau \geq 1$;\\
        \;\;\;\;\;\;\;\;\;\;\;\;\;\;\;\;\;\; $\xi>0$, small constant to distinguish active and inactive entries;\\
        \hline
        \text{for} \text {each iteration \emph{k}} \text{do}\\
        \text{ }\text{ }\text{ }\text{ } $e_{i,D}(k) = d_{i,D}(k) - \bm u^\text{T}_i(k) \bm w(k)$,\;$i=0,...,N-1$ \\
        \text{ }\text{ } \text{ } \text{VSS mechanism:} \\
        \text{ }\text{ }\text{ }\text{ } $\mu_i(k) = \frac{|e_{i,D}(k)|}{\|\bm u_i(k)\|_2+10^{-5}}$, \;$i=0,...,N-1$\\
        \text{ }\text{ }\text{ }\text{ } $\mu_i(k) = \left\{ \begin{aligned}
        &\mu_\text{max}, \text{ if } \mu_i(k) > \mu_\text{max} \\
        &\mu_\text{min}, \text{ if } \mu_i(k) < \mu_\text{max}\\
        &\mu_i(k), \text{otherwise}
        \end{aligned} \right.$\\
        \text{ }\text{ }\text{ }\text{ }\text{ }$\mu_{o,i}(k) = \beta \mu_{o,i}(k-1)  + (1-\beta) \min\{\mu_i(k),\;\mu_{o,i}(k-1)\}$ \\
        \text{ }\text{ }\text{ }\text{ }\text{ }$\bm \varphi(k+1) = \bm w(k) + \sum \limits_{i=0}^{N-1} \mu_{o,i}(k) \frac{\text{sgn}(e_{i,D}(k))\bm u_i(k)}{\sqrt{\|\bm u_i(k)\|_2^2+\delta}} $ \\
        \text{ }\text{ } \text{ } \text{Adaptation of $\rho$:} \\
        \text{ }\text{ }\text{ }\text{ }\text{ }$H(\bm \varphi(k+1)) = \sum_{m=1}^{M} \ln(1+|\varphi_m(k+1)|/\xi)$ \\
        \text{ }\text{ }\text{ }\text{ }\text{ }$H'(\varphi_m(k+1)) = \frac{\text{sgn}(\varphi_m(k+1))}{\xi + |\varphi_m(k+1)|}, m=1,...,M$\\
        \text{ }\text{ }\text{ }\text{ }\text{ }$\begin{aligned}
        &\text{if}\; k==0\\
        &\;\;\;\;\rho_o(k) = 0\\
        &\text{else} \\
        &\;\;\;\;\rho_o(k) = \chi \frac{\max[H(\bm \varphi(k+1)) - H(\hat{\bm w}),\; 0]}{||H'(\bm \varphi(k+1))||_2^2}\\
        &\text{end}
        \end{aligned}$\\
        \text{ }\text{ }\text{ }\text{ }\text{ }$\bm w(k+1) = \bm \varphi(k+1)- \rho_o(k) H'(\bm \varphi(k+1))$ \\
        \text{ }\text{ }\text{ }\text{ }\text{ }$\begin{aligned}
        &\text{if}\;k==0\\
        &\;\;\;\;\hat{\bm w} = \bm \varphi(k+1)\\
        &\text{else} \\
        &\;\;\;\;\hat{\bm w} = 0.5\hat{\bm w} + 0.5\bm \varphi(k+1)\\
        &\text{end}
        \end{aligned}$\\
        \text{ }\text{ }\text{ }\text{ }\text{ }$H(\hat{\bm w}) = \sum_{m=1}^{M} \ln(1+|\hat{w}_m|/\xi)$ \\
        \text{end}\\
        \hline
    \end{tabular}
\end{table}
{ \subsection{Discussions} \textit{Remark~3-link to some existing
algorithms:} the proposed S-IWF-SSAF algorithm is obtained by the
two-steps implementation given by~\eqref{eq:b38a}
and~\eqref{eq:b38b}, which differs from the traditional
sparsity-aware
framework~\cite{haddad2014transient,gu2009norm,de2014sparsity,yu2019sparsity}.
This framework makes for the development of the VP-S-IWF-SSAF
algorithm in Section~5. Note that, the S-IWF-SSAF algorithm becomes
the original IWF-SSAF algorithm when the sparsity penalty parameter
$\rho=0$ in~\eqref{eq:b38b}. In the VP-S-IWF-SSAF algorithm, the VSS
mechanism is inspired by the idea to derive the BDVSS-IWF-SSAF
algorithm in~\cite{kim2017delayless}, as it does not require~\emph{a
priori} knowledge of the background noise variance. Moreover, the
proposed VSS removes redundant terms like the one
in~\cite{kim2017delayless}. Roughly speaking, the proposed
VP-S-IWF-SSAF algorithm is a sparsity-aware modification of the
BDVSS-IWF-SSAF algorithm. Importantly, the VP-S-IWF-SSAF algorithm
employs an adaptive rule to choose the sparsity penalty
parameter~$\rho$.

\textit{Remark~4:} In the context of system identification, Table~\ref{table_3x} compares the computational complexity of the NSAF, IWF-SSAF, BDVSS-SSAF and BDVSS-IWF-SSAF algorithms with that of the S-IWF-SSAF and VP-S-IWF-SSAF algorithms, in terms of the total number of additions, multiplications, divisions, and square-roots per iteration~$k$\footnote{These amounts will be reduced by a factor of $1/N$ for each fullband input sample~$n$.}. Note that, $P^{+}=2(L-1)N$ and $P^{*}=2LN$ are the inherent additions and multiplications required by the SAF algorithms, for partitioning the input signal $u(n)$ and the desired signal $d(n)$. To exploit the sparsity of the unknown systems, the proposed S-IWF-SSAF algorithm requires extra $3M$~additions and $M$~divisions per iteration to calculate~\eqref{eq:b38a} and~\eqref{040} than the original IWF-SSAF algorithm. Likewise, in contrast to the existing BDVSS-IWF-SSAF algorithm, the proposed VP-S-IWF-SSAF algorithm requires not only the calculation of~\eqref{eq:b38a} and~\eqref{040} but also the adaptation of $\rho_o(k)$ in Section~5.2, that is, leading to extra $11M$~additions, $3M$~multiplications, $M+1-N$~divisions, and $2M$ logarithms per iteration, to improve the filter's performance in sparse systems as we shall see in simulations.
\begin{table*}[tbp]
    \scriptsize
    \centering
    \caption{Number of of Arithmetical Operations per iteration~$k$.}
    \label{table_3x}
    \begin{tabular}{@{}|l|c|c|c|c|c|}
        \hline
        \text{Algorithms} &\text{Additions} &\text{Multiplications} &\text{Divisions} &\text{Square-roots} &\text{Logarithms}\\
        \hline
        \text {NSAF}  &$(2M+2)N+P^{+}$ &$(2M+3)N+P^{*}$ &$N$ &- &-\\ \hline
        \text {IWF-SSAF}  &$(2M+3)N+P^{+}$ &$(2M+2)N+P^{*}$ &$N$ &$N$ &-\\ \hline
        \text {BDVSS-IWF-SSAF} &$(2M+9)N+P^{+}$ &$(2M+4)N+P^{*}$ &$3N$ &$N$ &-\\ \hline
        \text {S-IWF-SSAF} &$(2M+3)N+3M+P^{+}$ &$(2M+2)N+P^{*}$ &$N+M$ &$N$ &-\\ \hline
        \text {VP-S-IWF-SSAF} &$(2M+9)N+11M+P^{+}$ &$(2M+4)N+3M+P^{*}$ &$2N+M+1$ &$N$ &$2M$\\
        \hline
    \end{tabular}
\end{table*}
}

\section{Simulation Results}
In this section, extensive simulations are presented to verify our theoretical analysis and proposed algorithms. It is assumed that the length of the adaptive filter matches that of the unknown system. We use the cosine modulated analysis filter banks with $N$ subbands in the SAF structure. All of the results are the ensemble average of 200 independent trials, unless otherwise specified.

\subsection{Theoretical verifications}
In the system identification, the elements in $\bm w^o$ to be estimated are randomly generated according to the uniform distribution $[-0.5, 0.5]$. The used input signal~$u(n)$ originates from a first-order autoregressive (AR) system, $u(n)=0.9u(n-1)+\epsilon(n)$, where $\epsilon(n)$ is a zero-mean white Gaussian signal with variance~$\sigma_\epsilon^2=1$ except in Fig.~\ref{Fig2}(d). Such AR input has a high correlation relative to the white input with $u(n)=\epsilon(n)$. The additive noise $v(n)$ at the unknown system's output follows a CG random process, described in~assumption~2. In the CG, the background noise component~$v_\text{g}(n)$ with variance $\sigma_\text{g}^2$ gives rise to a signal-to-noise ratio (SNR) defined as $10\log_{10}(\sigma_{\bar{d}}^2/\sigma_\text{g}^2)$, where $\sigma_{\bar{d}}^2 = \text{E}\{(\bm u^\text{T}(n) \bm w^o)^2\}$ is the output signal power of the unknown system in noise-free environments. The parameter $\hbar$ for the impulsive noise component~$v_\text{im}(n)$ is set to $\hbar = 300000$, which determines the impulsive characteristic for its realizations. The expectations $\text{E}\{\bm A_i(k)\}, \text{E}\{\check{\bm A}_i(k)\}, \bm R_i$, and $\sigma_{u,i}^2$ in the theoretical models are obtained by the available ensemble average.

Figs.~\ref{Fig2} and~\ref{Fig3} show the steady-state MSDs as a function of $\mu$ from 0.01 to 0.35, where the steady-state MSDs are obtained by averaging 500 instantaneous MSD values in the steady-state. The theoretical stability upper bound is computed according to~\eqref{040x1} with the relation~$(b)$ in~\eqref{040x2}. In Fig.~\ref{Fig2}, we investigate the effect of $p_r$, $N$, SNR, and $\sigma_\epsilon^2$, respectively, by varying one of them, on the stability condition for the IWF-SSAF algorithm. In Fig.~\ref{Fig2}, to investigate the effect of $M$ on the algorithm's stability, we normalize  $\bm w^o$ as $||\bm w^o||_2=1$. As can be seen from~Figs.~\ref{Fig2} and~\ref{Fig3}, under the impulsive noise environment with small occurrence probability~$p_r$, the theoretical stability range is valid for guiding the choice of~$\mu$. Moreover, large~$p_r$ and small SNR make the stability upper bound shrink. Interestingly, the input power $\sigma_\epsilon^2$ and $N$ do not seem to affect the stability bound of the algorithm.
\begin{figure*}[htb]
    \centering
    \includegraphics[scale=0.45] {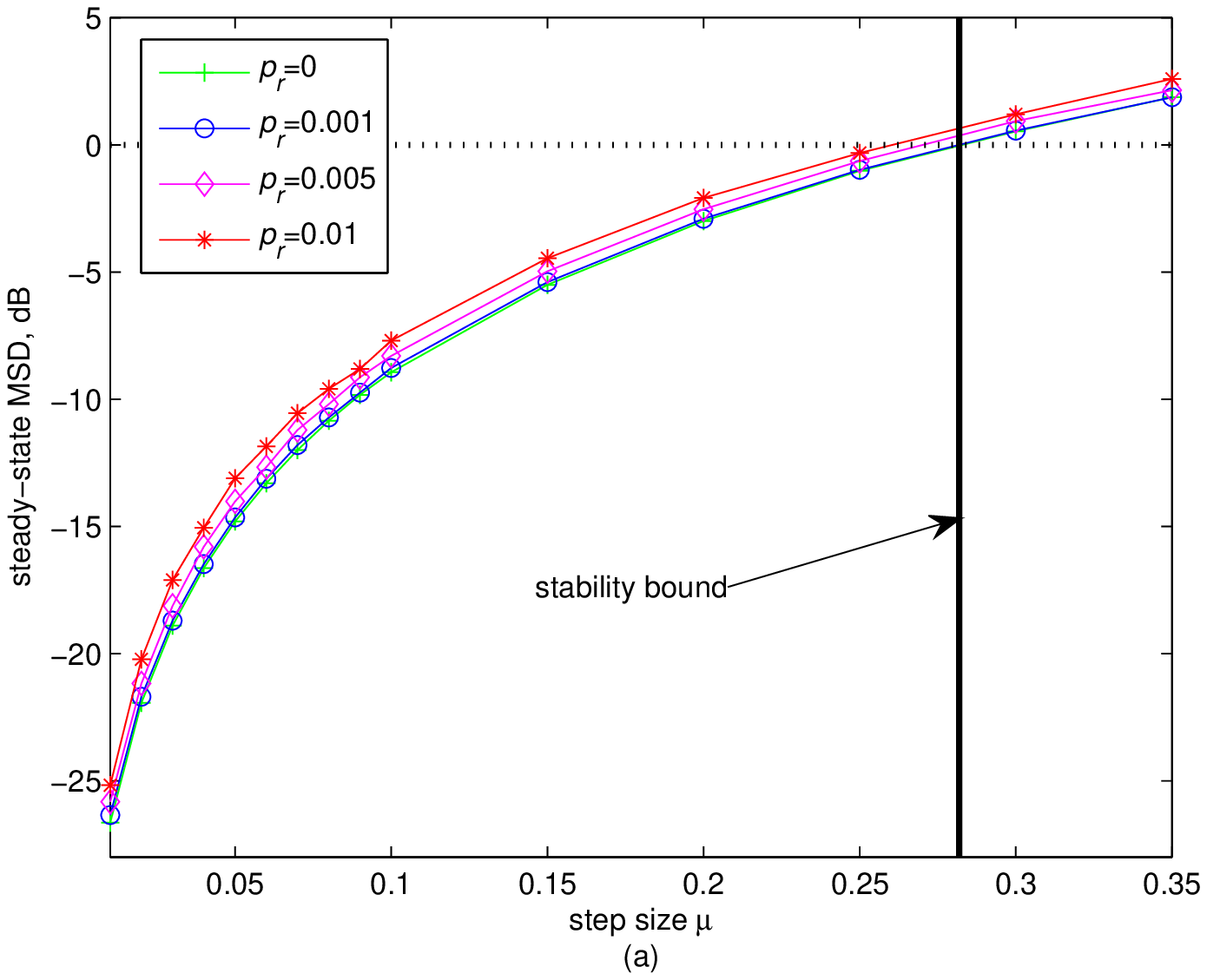}
    \hspace{10ex}
    \includegraphics[scale=0.45] {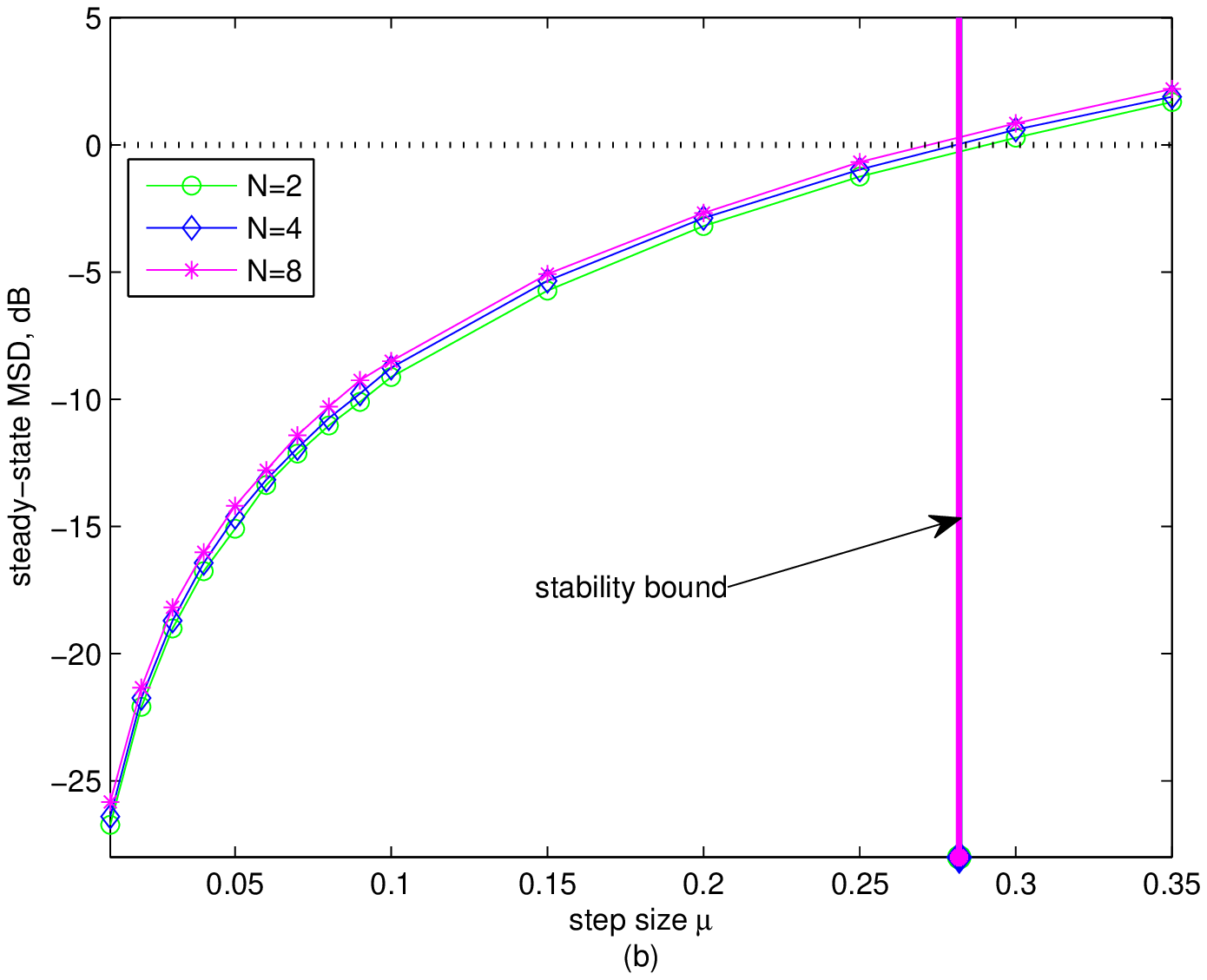}

    \includegraphics[scale=0.45] {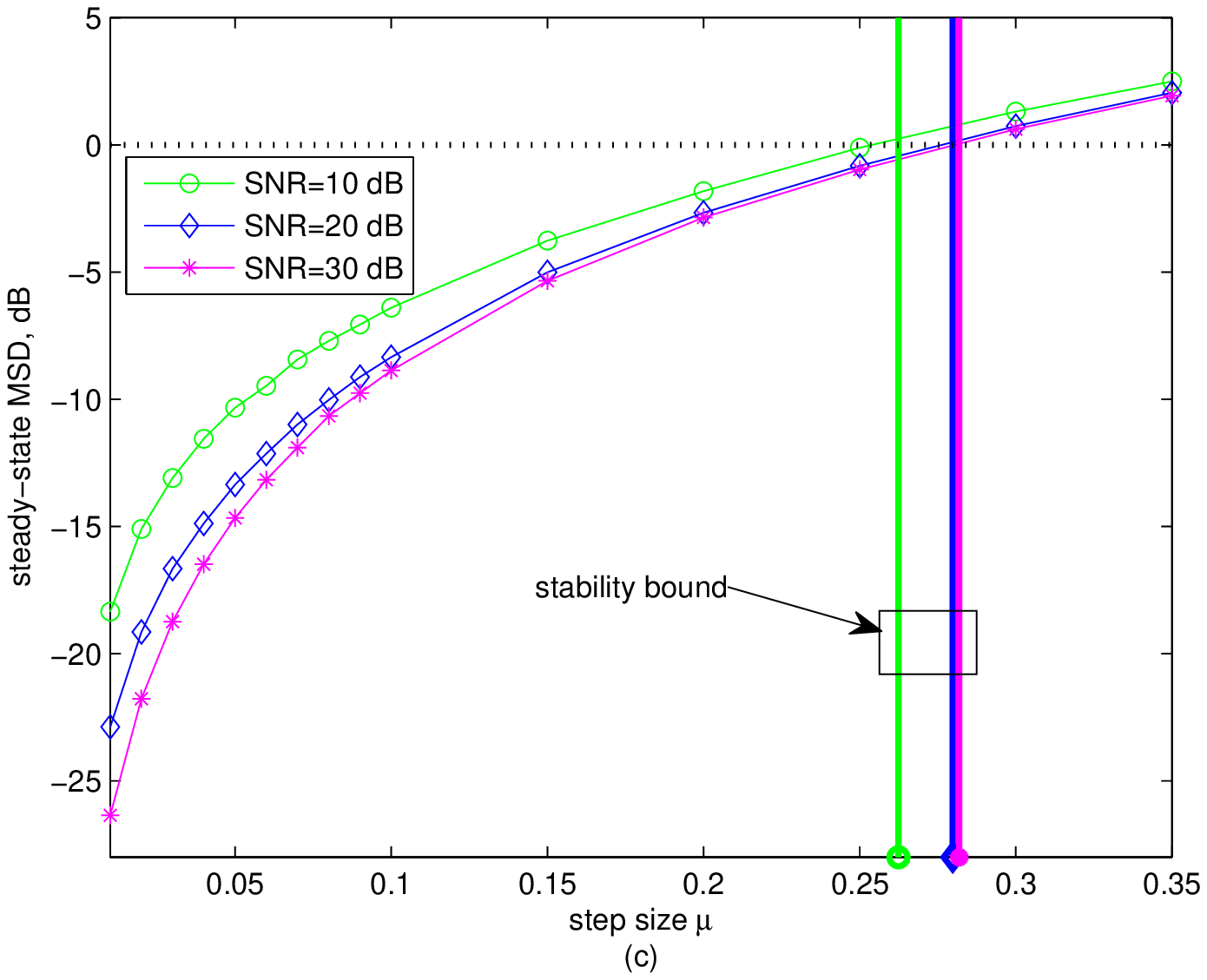}
    \hspace{10ex}
    \includegraphics[scale=0.45] {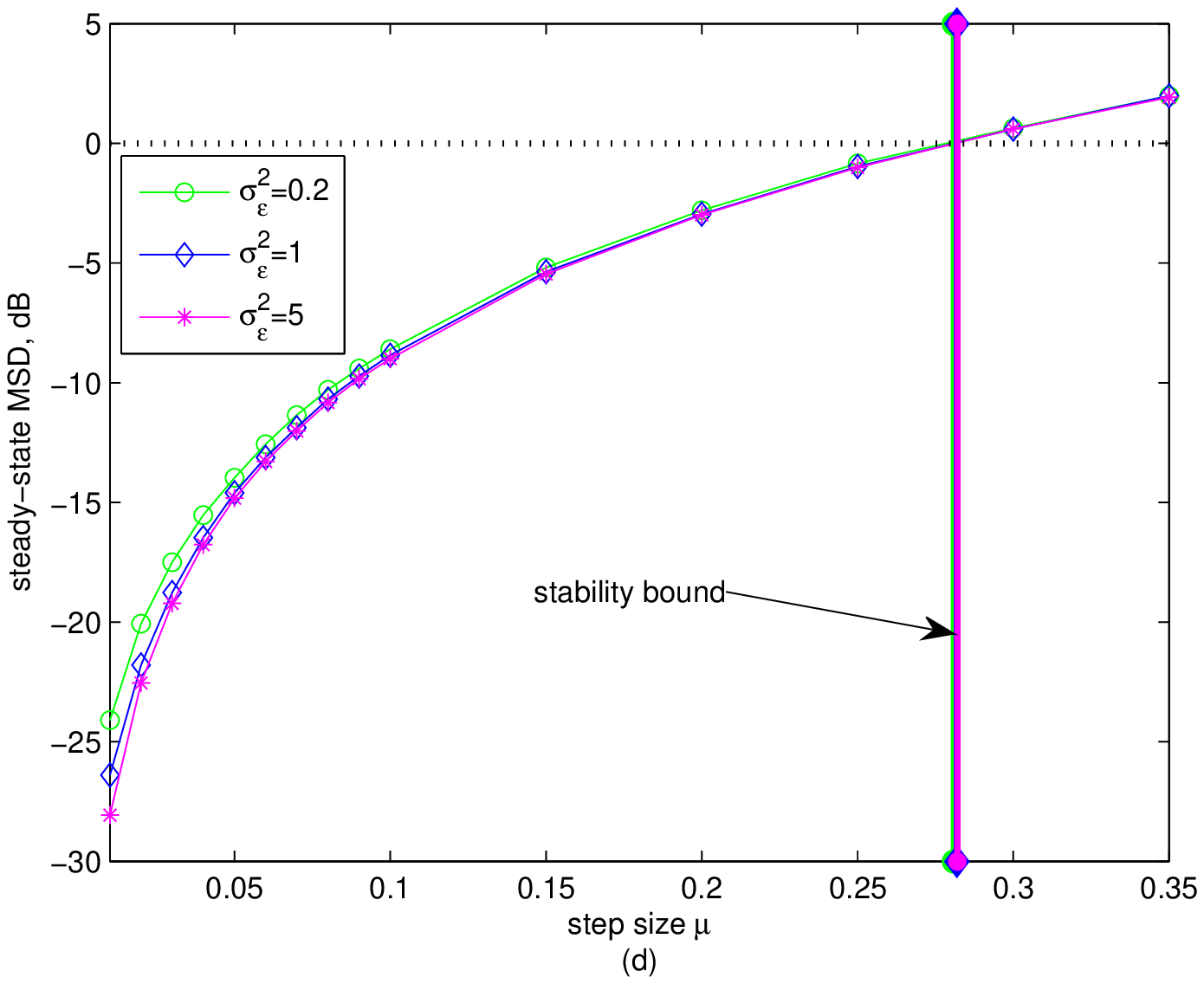}
    \vspace{-1em} \caption{Steady-state MSDs versus step size~$\mu$. (a) $N=4$, $\sigma_\epsilon^2=1$, SNR=30~dB; (b) $p_r=0.001$, $\sigma_\epsilon^2=1$, SNR=30~dB; (c) $N=4$, $\sigma_\epsilon^2=1$, $p_r=0.001$; (d) $N=4$, $p_r=0.001$, SNR=30~dB. [$M=32$].}
    \label{Fig2}
\end{figure*}
\begin{figure}[htb]
    \centering
    \includegraphics[scale=0.45] {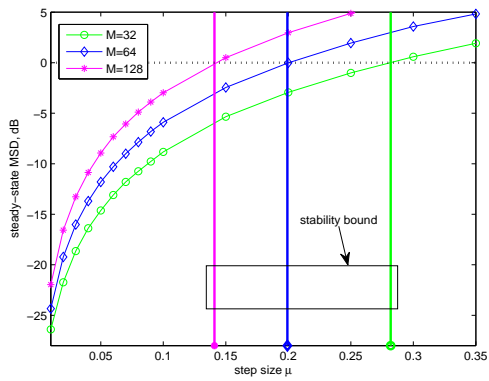}
    \vspace{-1em} \caption{Steady-state MSDs versus step size~$\mu$. [$p_r=0.001$, $N=4$, $\sigma_\epsilon^2=1$, SNR=30~dB].}
    \label{Fig3}
\end{figure}

Figs.~\ref{Fig4} to~\ref{Fig6} check the proposed transient model~\eqref{eq:043} (where $\rho=0$) for the IWF-SSAF algorithm and compares with the transient model in~\cite{yu2016novel}. As can be seen, the proposed model matches better the simulation than the existing model. This is due mainly to the fact that the existing model relies on the long adaptive filter assumption and the whitening assumption on the decimated subband input vectors, while the proposed model does not require them. For instance, in Fig.~\ref{Fig4}, the steady-state results for the existing model are closer to the simulations as~$M$ increases. It is seen from~Fig.~\ref{Fig5} that the fixed step-size~$\mu$ can not make the IWF-SSAF algorithm reach fast convergence and low steady-state MSD simultaneously. By increasing the number of subbands~$N$, the IWF-SSAF algorithm can speed up the convergence, see Fig.~\ref{Fig6}. Moreover, Figs.~\ref{Fig5} and~\ref{Fig6} reveal that the theory expression~\eqref{046} with~\eqref{028} on the steady-state MSD of the IWF-SSAF algorithm is valid only when the step-size is small.
\begin{figure}[htb]
    \centering
    \includegraphics[scale=0.45] {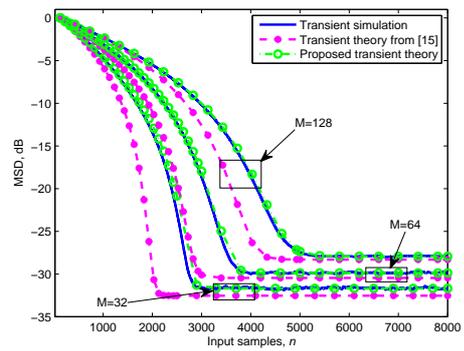}
    \vspace{-1em} \caption{MSD curves of the IWF-SSAF algorithm for different~$M$. [$p_r=0.001$, $\mu=0.004$, $N=4$, SNR=30~dB].}
    \label{Fig4}
\end{figure}
\begin{figure}[htb]
    \centering
    \includegraphics[scale=0.45] {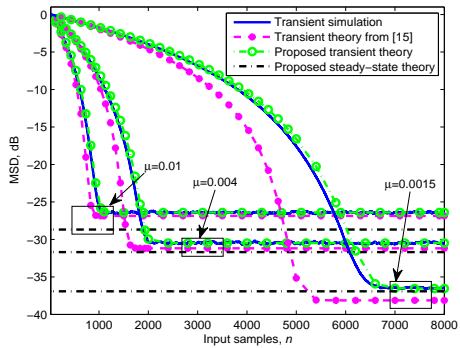}
    \vspace{-1em} \caption{MSD curves of the IWF-SSAF algorithm using different step sizes. [$p_r=0.001$, $N=4$, $M=32$, SNR=30~dB].}
    \label{Fig5}
\end{figure}
\begin{figure}[htb]
    \centering
    \includegraphics[scale=0.45] {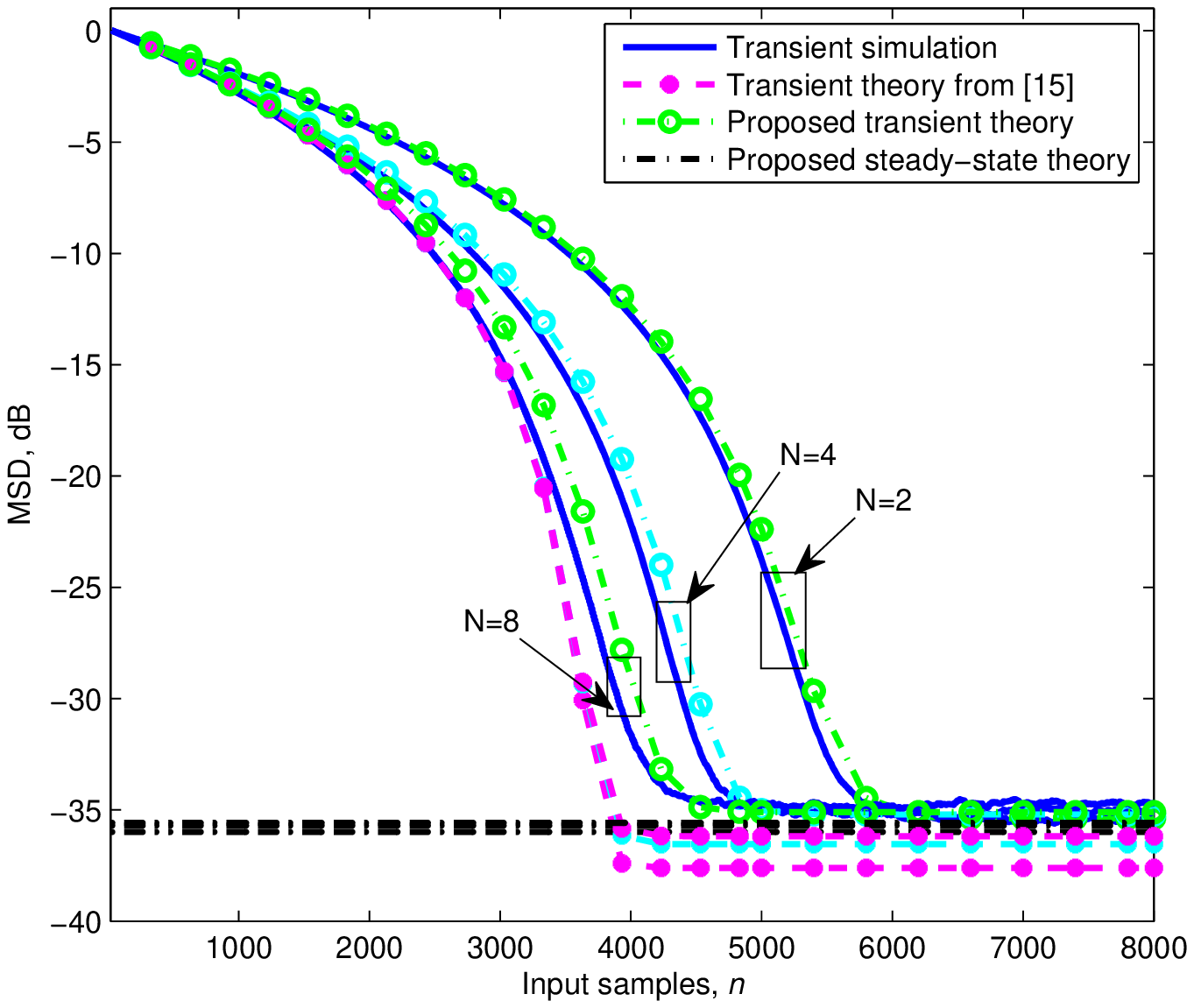}
    \vspace{-1em} \caption{MSD curves of the IWF-SSAF algorithm for different~$N$. [$p_r=0.001$, $\mu=0.002$, $M=32$, SNR=30~dB].}
    \label{Fig6}
\end{figure}

In Figs.~\ref{Fig8} and~\ref{Fig9}, we check the theoretical insights given in Section~IV.~B on the S-IWF-SSAF algorithm in sparse scenarios. The sparse vector $\bm w^o$ of interest has $M=64$ entries, where its NZ entries are Gaussian variables with zero mean and variance of $1/\sqrt{|\text{NZ}|}$ and their positions are randomly selected from the binomial distribution. As one can see in Fig.~\ref{Fig8}, in terms of the transient MSD of the S-IWF-SSAF algorithm, the theoretical curves obtained from~\eqref{eq:043} have good fit with the simulated curves. In order to show the better performance of the S-IWF-SSAF algorithm when the parameter vector of interest is sparse (where $|\text{NZ}|=4$) as compared to the IWF-SSAF algorithm, we choose $\rho=4\times10^{-5}$ in Fig.~\ref{Fig8}(a) and $\rho=7\times10^{-5}$ in Fig.~\ref{Fig8}(b), respectively. Figs.~\ref{Fig9} investigates the effect of $\rho$ on the S-IWF-SSAF performance in the steady-state. As declared in theorem~2 or~\eqref{0D2}, in sparse systems there is a range of values for choosing~$\rho$ so that the S-IWF-SSAF algorithm outperforms the IWF-SSAF algorithm. Moreover, this range will gradually die out as $\bm w^o$ becomes non-sparse (i.e., NZ entries become more).
\begin{figure}[htb]
    \centering
    \includegraphics[scale=0.45] {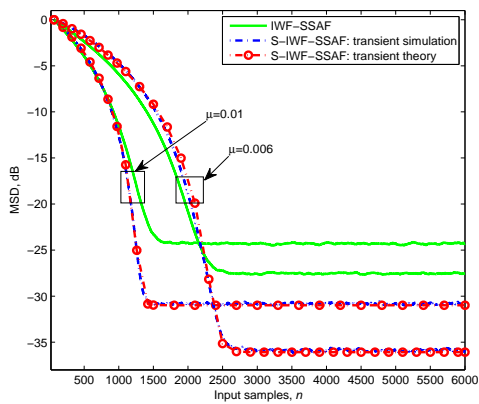}
    \vspace{-1em} \caption{MSD curves of the IWF-SSAF and S-IWF-SSAF algorithms for identifying sparse~$\bm w^o$. [$\xi=0.05$, $p_r=0.001$, $N=4$, SNR=30~dB].}
    \label{Fig8}
\end{figure}
\begin{figure}[htb]
    \centering
    \includegraphics[scale=0.46] {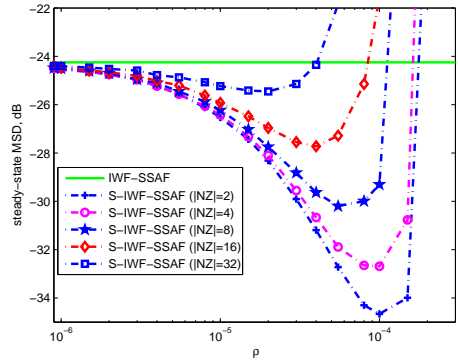}
    \vspace{-1em} \caption{Steady-state MSDs versus~$\rho$. [$\xi=0.05$, $p_r=0.001$, $\mu=0.01$, $N=4$, SNR=30~dB].}
    \label{Fig9}
\end{figure}
\subsection{Comparison of algorithms}
In this subsection, we compare the performance of the proposed S-IWF-SSAF and VP-S-IWF-SSAF algorithms with that of the NSAF, BDVSS-SSAF~\cite{yoo2014band}, IWF-SSAF, and BDVSS-IWF-SSAF~\cite{kim2017delayless} algorithms under the AEC environment. In the hands-free telephone system, the echo is frequently encountered, that is, the talker hears his own time-delayed voice~\cite{hansler2005acoustic}. Concretely, $\bm w^o$ denotes the acoustic echo channel between loudspeaker and microphone at the near-end. The far-end speech $u(n)$ is played at the loudspeaker, and after passing through $\bm w^o$ to yield the echo signal $y(n) = \bm u^T(n) \bm w^o$; at the same time, the echo signal $y(n)$ is picked up by the microphone and sent to the far-end talker, which impair the quality of the speech. In adaptive AEC, by feeding the same input signal $u(n)$, the output of the adaptive filter $\bm w(n)$ will be the replica of the echo, i.e., $\hat{y}(n) = \bm u^T(n) \bm w(n)$, thus by performing $e(n) = d(n)-\hat{y}(n)$ (where the microphone signal~$d(n)$ consists of the echo, the background noise $v_g(n)$, and the possible near-end speech or impulsive noise~$v_\text{im}(n)$), we can cancel the echo as $\bm w(n) \rightarrow \bm w^o$. That is to say, the adaptive AEC is a typical adaptive system identification problem that identifies the acoustic echo channel, even if we eventually need the signal~$e(n)$. To address the delay issue from the original SAF structure depicted in Fig.~\ref{Fig1} for an AEC application, however, Fig.~\ref{FigAEC} shows its delayless structure~\cite{lee2007delayless}. In comparison, the only difference in the delayless structure is that $e(n)$ is calculated in the original sequences through an auxiliary loop by copying $\bm w(k)$ to $\bm w(n)$ once for every $N$ input samples (i.e., when $n=kN$). Here, the sparse acoustic echo channel is shown in Fig.~\ref{Fig10} with $M=512$ taps. We set the number of subbands to~$N=8$ for all the SAF algorithms.
\begin{figure}[htb]
    \centering
    \includegraphics[scale=0.4] {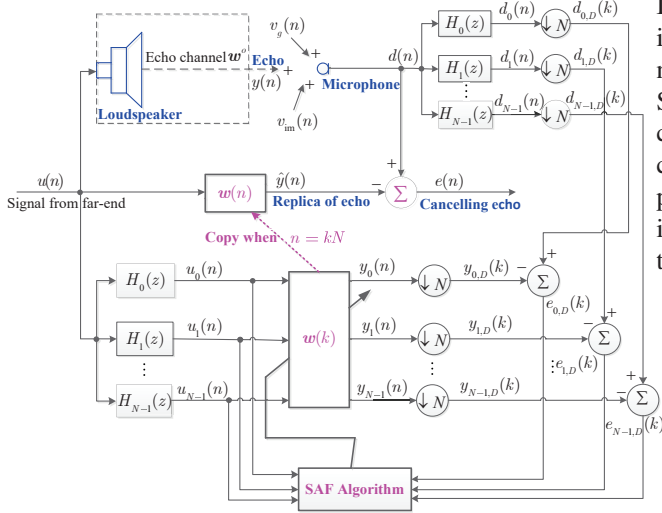}
    \vspace{-1em} \caption{Delayless multiband structure of SAF applied into AEC.}
    \label{FigAEC}
\end{figure}
\begin{figure}[htb]
    \centering
    \includegraphics[scale=0.45] {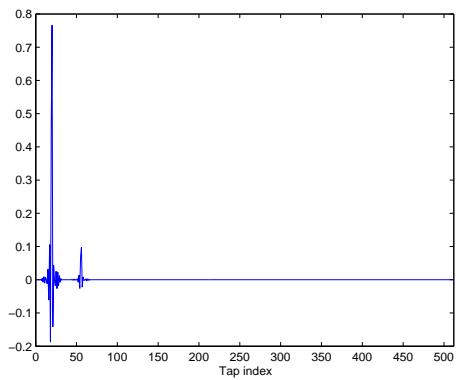}
    \vspace{-1em} \caption{Sparse acoustic echo channel.}
    \label{Fig10}
\end{figure}

In the first example, we consider the (symmetric) $\alpha$-stable process to describe the additive noise $v(n)$ with impulsive samples~\cite{nikias1995signal,georgiou1999alpha,pelekanakis2014adaptive}, which is more realistic than the CG process usually used in the analysis. The $\alpha$-stable noise is expressed by $\phi(t) = \exp(-\gamma |t|^\alpha)$, where the characteristic exponent $0<\alpha \leq 2$ determines the noise's impulsiveness (whose role behaves like the impulsive noise probability), and $\gamma>0$ indicates the dispersion level of the noise. Note that, when $\alpha=2$, it is the Gaussian noise. In view of acoustic scenarios, we set $\alpha=1.5$ and $\gamma=1/30$~\cite{georgiou1999alpha}. The normalized MSD, i.e., $\text{NMSD}(n) = 10\log10(\text{MSD}(n)/||\bm w^o||_2^2)$ is used for measuring the algorithms' performance. Fig.~\ref{Fig11} depicts the NMSD performance of the algorithms for the AR input. In this figure, to compare the tracking capability of these algorithms, $\bm w^o$ undergoes a sudden change by shifting its~12 taps to the right at the 80001-th input sample. As expected, all the SSAF-type algorithms show stable convergence in the $\alpha$-stable noise, while the NSAF algorithm has severe shaking. Since the IWF-SSAF algorithm employs the band-dependent weighting factors rather than the common one in the SSAF algorithm, the BVDSS-IWF-SSAF algorithm has better convergence than the BDVSS-SSAF algorithm. Due to the sparsity-aware step~\eqref{eq:b38b}, the proposed S-IWF-SSAF algorithm significantly enhance the IWF-SSAF's steady-state performance when identifying sparse channels. Moreover, by proposing the adaptation of $\rho$, the S-IWF-SSAF with $\rho_o(k)$ algorithm overcomes the selection problem of $\rho$ in the S-IWF-SSAF algorithm. Benefited from the adaptively adjusting parameters ($\mu$ and $\rho$), the proposed VP-S-IWF-SSAF algorithm is superior to the other algorithms in terms of convergence and steady-state behaviors. It is noticed that like the BDVSS-SSAF and BDVSS-IWF-SSAF algorithms, the VP-S-IWF-SSAF algorithm also cannot track the sudden change of $\bm w^o$, but this issue can be addressed by applying the reset algorithm (RA) presented in~\cite{yoo2014band}. By using the speech signal as the input, similar results can be obtained in Fig.~\ref{Fig12} except the tracking performance.
\begin{figure}[htb]
    \centering
    \includegraphics[scale=0.45] {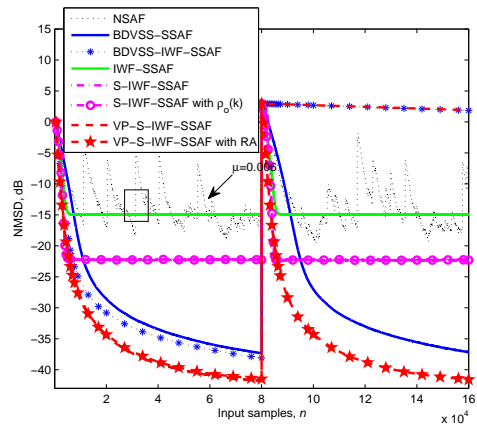}
    \vspace{-1em} \caption{NMSD curves of various SAF algorithms for the AR input. We set the fixed step-size $\mu=0.2$ for the NSAF algorithm and $\mu=0.01$ for the IWF-SSAF, S-IWF-SSAF, and S-IWF-SSAF with $\rho_o(k)$ algorithms. The parameter~$\rho$ for the S-IWF-SSAF algorithm is chosen as $5\times10^{-6}$ in a trial and error way. For the proposed sparsity-aware algorithms, we choose $\xi=0.01$, $\chi=2$ for the S-IWF-SSAF with $\rho_o(k)$ and $\chi=1$ for the VP-S-IWF-SSAF. Other parameters of algorithms are set as follows: $V_T=M$, $\kappa=1$ for the BDVSS-SSAF; $\mu_U=0.06$, $\mu_L=10^{-5}$, $\kappa=1$ for the BDVSS-IWF-SSAF; $\tau=1$, $\mu_{\min}=10^{-5}$ for the VP-S-IWF-SSAF.}
    \label{Fig11}
\end{figure}
\begin{figure}[htb]
    \centering
    \includegraphics[scale=0.45] {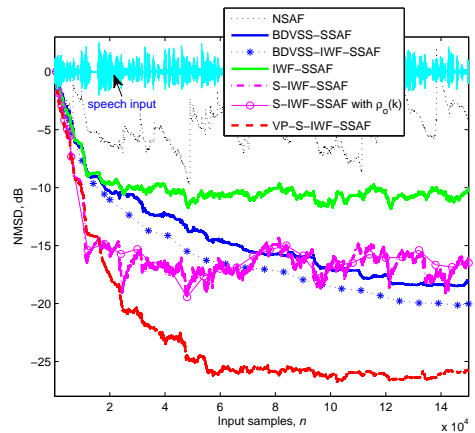}
    \vspace{-1em} \caption{NMSD curves of various SAF algorithms for the speech input (single run). Since the speech signal exists the silent period (zero values), to avoid the division by zero in the algorithms' update, we add a regularization parameter~$\delta$, i.e., $\delta=20\sigma_u^2/N$ (NSAF and proposed algorithms), $\delta=20\sigma_u^2$ (BDVSS-SSAF and BDVSS-IWF-SSAF). Some algorithms' parameters are retuned as $\kappa=2$ (BDVSS-SSAF), $\kappa=3$ (BDVSS-IWF-SSAF), and $\tau=2$ (VP-S-IWF-SSAF).}
    \label{Fig12}
\end{figure}

In the second example, we consider the double-talk case that the near-end speech also appears. In this case, we choose $\alpha=1.8$ for the $\alpha$-stable noise. The NMSD results of the algorithms are shown in Fig.~\ref{Fig12}. As opposed to the NSAF algorithm, the SSAF-type algorithms (especially for the VSS versions) are insensitive to the double-talk. The VP-S-IWF-SSAF algorithm is still the best choice among these algorithms, because it optimizes the selections of the step-size and the sparsity penalty parameter.
\begin{figure}[htb]
    \centering
    \includegraphics[scale=0.45] {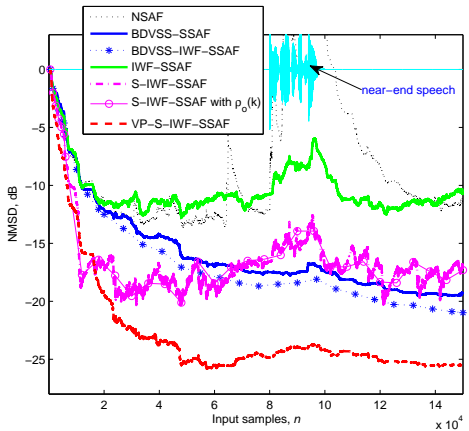}
    \vspace{-1em} \caption{NMSD curves of various SAF algorithms in the double-talk scenario (single run). Parameters setting is the same as in Fig.~\ref{Fig12}.}
    \label{Fig13}
\end{figure}
\section{Conclusion}
In this work, the S-IWF-SSAF algorithm was proposed to take advantage of the underlying sparsity of the systems. The theoretical analysis of the S-IWF-SSAF algorithm has been performed that it has significant improvement in the steady-state performance as compared to the IWF-SSAF counterpart when works in sparse system environments. Even for the IWF-SSAF algorithm, the proposed analysis does not require special assumptions, so it matches more accurate with the simulated results than the existing analysis. Moreover, we have developed joint time-varying schemes of both the step-size and the sparsity penalty parameter for the S-IWF-SSAF algorithm, i.e., VP-S-IWF-SSAF, to further improve the convergence and steady-state performance. Simulations in both system identification and AEC situations have been conducted to verify our theoretical analysis and the effectiveness of the proposed algorithms.

{  It is worth pointing out that, for the proposed VP-S-IWF-SSAF
algorithm we simply choose the log-penalty given in~\eqref{0b39}
from several sparsity-aware strategies as a paradigm. Therefore, in
the future, it is also necessary to study the effect of different
sparsity-aware strategies with respect to the effectiveness of joint
variable parameters in this algorithm.}

\section*{Acknowledgments}
This work was partially supported by the National Natural Science Foundation of China (NSFC) (Nos. 61901400), and the Doctoral Research Fund of Southwest University of Science and Technology in China (No. 19zx7122).

\appendix
\renewcommand{\appendixname}{Appendix~}
\section{Derivation of~\eqref{eq:042a}}

Taking the expectation for both sides of~\eqref{eq:041a}, we have
\begin{equation}
\label{0A1}
\begin{aligned}
\text{E}\{\widetilde{\bm \varphi}(k+1)\} = \text{E}\{\widetilde{\bm w}(k)\} - \mu \sum \limits_{i=0}^{N-1} \text{E} \left\lbrace  \frac{\text{sgn}(e_{i,D}(k))\bm u_i(k)}{\|\bm u_i(k)\|_2} \right\rbrace.
\end{aligned}
\end{equation}
To compute the last expectation in~\eqref{0A1}, we introduce the conditional expectation:
\begin{equation}
\label{0A2}
\begin{aligned}
&\text{E} \left\lbrace  \frac{\text{sgn}(e_{i,D}(k))\bm u_i(k)}{\|\bm u_i(k)\|_2} \right\rbrace = \\
&\;\;\;\;\;\;\;\;\;\;\;\;\;\;\;\;\;\;\text{E} \left\lbrace \text{E} \left\lbrace  \frac{\text{sgn}(e_{i,D}(k))\bm u_i(k)}{\|\bm u_i(k)\|_2} \left|\widetilde{\bm w}(k)\right.   \right\rbrace \right\rbrace.
\end{aligned}
\end{equation}

Since $\bm h_i$ is deterministic, it follows~\eqref{006} and assumption~2 that $v_{i,D}(k)$ can also be referred to as a CG process, i.e., $v_{i,D}(k)=v_{\text{g},i,D}(k) + b_i(k)\eta_{i,D}(k)$, where $v_{\text{g},i,D}(k)$ and $\eta_{i,D}(k)$ are zero-mean white Gaussian with variances $\sigma_{\text{g},i}^2 = ||\bm h_i||_2^2 \sigma_\text{g}^2$ and $\sigma_{\eta,i}^2=||\bm h_i||_2^2 \sigma_\eta^2$ respectively, and $b_i(k)$ obeys the Bernoulli distribution with $P\{b_i(k)=1\}=p_r$ being the probability of occurring 1. It should be stressed that if the analysis filter bank for partitioning the input signal $u(n)$ and the desired signal $d(n)$ is assumed to be identical and paraunitary, as done in references~\cite{yin2011stochastic} for analyzing the NSAF algorithm, then $||\bm h_i||_2^2=1/N$ will be further given.

Hence, applying the CG noise model and  the law of total probability, the following relation is established:
\begin{equation}
\label{0A3}
\begin{aligned}
&\text{E} \left\lbrace \frac{\text{sgn}(e_{i,D}(k))\bm u_i(k)}{\|\bm u_i(k)\|_2} \left|\widetilde{\bm w}(k) \right. \right\rbrace = \\
&P\{b(k)=1\} \text{E} \left\lbrace \frac{\text{sgn}(e_{i,D,1}(k))\bm u_i(k)}{\|\bm u_i(k)\|_2} \left|\widetilde{\bm w}(k) \right. \right\rbrace + \\
&P\{b(k)=0\}\text{E} \left\lbrace \frac{\text{sgn}(e_{i,D,2}(k))\bm u_i(k)}{\|\bm u_i(k)\|_2} \left|\widetilde{\bm w}(k) \right. \right\rbrace = \\
&\;\;\;\;\;\;\;\;\;\;\;\;\;\;p_r \text{E} \left\lbrace \frac{\text{sgn}(e_{i,D,1}(k))\bm u_i(k)}{\|\bm u_i(k)\|_2} \left|\widetilde{\bm w}(k) \right. \right\rbrace + \\
&\;\;\;\;\;\;\;\;\;\;\;\;\;\;(1-p_r)\text{E} \left\lbrace \frac{\text{sgn}(e_{i,D,2}(k))\bm u_i(k)}{\|\bm u_i(k)\|_2} \left|\widetilde{\bm w}(k) \right. \right\rbrace,
\end{aligned}
\end{equation}
where $e_{i,D,1}(k) = e_{i,a}(k)+v_{\text{g},i,D}(k)+\eta_{i,D}(k)$ and $e_{i,D,2}(k) = e_{i,a}(k)+v_{\text{g},i,D}(k)$.

Both $v_{\text{g},i,D}(k)$ and $\eta_{i,D}(k)$ are Gaussian random variables, thus we can assume $e_{i,D,1}(k)$ and $e_{i,D,2}(k)$ to be zero mean Gaussian variables when~$M$ is large~\cite{al2003transient}. Accordingly, based on Price's theorem\footnote{If real-valued random variables $x$ and $y$ are jointly Gaussian, it holds that $\text{E}\{x\text{sgn}(y)\} = \sqrt{2/\pi}\text{E}\{xy\} / \sqrt{\text{E}\{y^2\}}$~\cite{price1958useful}.} and assumption~3, we have
\begin{equation}
\label{0A4}
\begin{aligned}
&\text{E} \left\lbrace \frac{\text{sgn}(e_{i,D,1}(k))\bm u_i(k)}{\|\bm u_i(k)\|_2} \left|\widetilde{\bm w}(k) \right. \right\rbrace= \\
&\;\;\;\;\;\;\;\;\;\;\;\sqrt{\frac{2}{\pi}} \frac{1}{\sqrt{\text{E} \{e_{i,D,1}^2(k)\}}} \text{E}\{\bm A_i(k)\} \widetilde{\bm w}(k),
\end{aligned}
\end{equation}
and
\begin{equation}
\label{0A5}
\begin{aligned}
&\text{E} \left\lbrace \frac{\text{sgn}(e_{i,D,2}(k))\bm u_i(k)}{\|\bm u_i(k)\|_2} \left|\widetilde{\bm w}(k) \right. \right\rbrace= \\
&\;\;\;\;\;\;\;\;\;\;\;\sqrt{\frac{2}{\pi}} \frac{1}{\sqrt{\text{E} \{e_{i,D,2}^2(k)\}}} \text{E}\{\bm A_i(k)\} \widetilde{\bm w}(k),
\end{aligned}
\end{equation}
where $\bm A_i(k) = \frac{\bm u_i(k) \bm u_i^\text{T}(k)}{\|\bm u_i(k)\|_2}$.

By substituting \eqref{0A2}-\eqref{0A5} into~\eqref{0A1}, we will obtain ~\eqref{eq:042a}.
\section{Derivation of~\eqref{eq:043a}}
Both sides of \eqref{eq:041a} are multiplied by their transposes, then we take the expectations of all the terms to yield
\begin{equation}
\label{0B1}
\begin{aligned}
\widetilde{\bm \varPhi}&(k+1)  = \widetilde{\bm W}(k) - \\
&\mu  \sum \limits_{i=0}^{N-1} \underbrace{\text{E}\left\lbrace \frac{\widetilde{\bm w}(k) \text{sgn}(e_{i,D}(k)) \bm u_i^\text{T}(k)}{\|\bm u_i(k)\|_2} \right\rbrace } \limits_\text{I} - \\
&\mu \sum \limits_{i=0}^{N-1} \underbrace{\text{E} \left\lbrace \frac{\bm u_i(k) \text{sgn}(e_{i,D}(k)) \widetilde{\bm w}^\text{T}(k) }{\|\bm u_i(k)\|_2}\right\rbrace }\limits_\text{II} + \\
&\mu^2 \underbrace{ \text{E}\left\lbrace  \sum \limits_{i=0}^{N-1} \frac{\bm u_i(k) \text{sgn}(e_{i,D}(k))}{\|\bm u_i(k)\|_2} \sum \limits_{j=0}^{N-1} \frac{\bm u_j^\text{T}(k) \text{sgn}(e_{j,D}(k))}{\|\bm u_j(k)\|_2} \right\rbrace }\limits_\text{III},
\end{aligned}
\end{equation}

By performing similar procedures to those in~\eqref{0A2}-\eqref{0A5}, the term I in~\eqref{0B1} can be calculated by
\begin{equation}
\label{0B2}
\begin{aligned}
\text{E}\left\lbrace \frac{\widetilde{\bm w}(k) \text{sgn}(e_{i,D}(k)) \bm u_i^\text{T}(k)}{\|\bm u_i(k)\|_2} \right\rbrace = \widetilde{\bm W}(k) \Omega_i(k) \text{E}\{\bm A_i(k)\},
\end{aligned}
\end{equation}
but we omit this derivation for brevity. The term II in~\eqref{0B1} is the transpose of~\eqref{0B2}, i.e.,
\begin{equation}
\label{0B3}
\begin{aligned}
\text{E} \left\lbrace \frac{\bm u_i(k) \text{sgn}(e_{i,D}(k)) \widetilde{\bm w}^\text{T}(k) }{\|\bm u_i(k)\|_2}\right\rbrace = \Omega_i(k) \text{E}\{\bm A_i(k)\} \widetilde{\bm W}(k) .
\end{aligned}
\end{equation}

The term~III in~\eqref{0B1} can be approximated as $\sum \limits_{i=0}^{N-1} \text{E}\{\check{\bm A}_i(k)\}$, where $\check{\bm A}_i(k) = \frac{\bm u_i(k) \bm u_i^\text{T}(k)}{\|\bm u_i(k)\|_2^2}$, because different subband vectors $\bm u_i(k)$ and $\bm u_j(k)$ are weakly correlated~\cite{lee2009subband,yin2011stochastic}. With these relations, it is easy to derive~\eqref{eq:043a} from~\eqref{0B1}.

\section{}
For ease of evaluating~$\text{E}\{H'(\bm \varphi(k))\}$, $\bm \varTheta(k)$, and $\bm \varXi(k)$, we need another two assumptions. They have been frequently called up for simplifying the analyses of sparsity-aware adaptive filtering algorithms~\cite{haddad2014transient,yu2019sparsity,TSP2012gu}.

\textit{Assumption 4}: The $m$-th component of the intermediate weights error vector $\widetilde{\bm \varphi}(k)$ for every~$k$, has a Gaussian distribution, namely, $\widetilde{\varphi}_m(k) \sim \aleph(z_m(k),\sigma_m^2(k))$, where the mean $z_m(k)$ is the $m$-th component of $\text{E}\{\widetilde{\bm \varphi}(k)\}$ from~\eqref{eq:042a} and the variance $\sigma_{m}^2(k)$ is computed from~\eqref{eq:043a} by $\sigma_m^2(k) = \widetilde{\bm \varPhi}_{m,m}(k)-z_m^2(k)$. As such, $\varphi_m(k)$ follows the distribution~$\aleph(\bar{\varphi}_m,\sigma_{\varphi,m}^2)$ with $\bar{\varphi}_m=w_m^o-z_m(k)$ and $\sigma_{\varphi,m}^2=\sigma_m^2(k)$ where $w_m^o$ is the $m$-th component of $\bm w^o$.

\textit{Assumption 5}: When $m\neq l$, it can be assumed that $\text{E}\{H'(\varphi_m(k)) \varphi_l(k)\} \approx \text{E}\{H'(\varphi_m(k))\} \text{E}\{ \varphi_l(k)\}$ and $ \text{E}\{H'(\varphi_m(k)) H'(\varphi_l(k))\} \approx \text{E}\{H'(\varphi_m(k))\} \text{E}\{H'(\varphi_l(k))\}$.

Based on assumption~4, we compute the $m$-th component of $\text{E}\{H'(\bm \varphi(k)\}$ by
\begin{equation}
\label{0C1}
\begin{aligned}
\text{E}\{H'(\varphi_m(k))\} \approx \frac{\text{E}\{\text{sgn}(\varphi_m(k))\}}{\xi + \text{E}\{|\varphi_m(k)|\}},
\end{aligned}
\end{equation}
where this approximation also because $\xi$ is relatively small, and
\begin{equation}
\label{0C2}
\begin{array}{rcl}
\begin{aligned}
\text{E}\{|\varphi|\} =& \frac{1}{\sqrt{2\pi}\sigma_\varphi} \int_{-\infty}^{\infty}|\varphi|\exp^{-\left( \frac{\varphi-\bar{\varphi}}{\sqrt{2}\sigma_\varphi}\right)^2}d\varphi \\
=&\sqrt{\frac{2}{\pi}} \sigma_\varphi \exp^{-\frac{\bar{\varphi}^2}{2\sigma_\varphi^2}} + \bar{\varphi} \text{erf}\left( \frac{\bar{\varphi}}{\sqrt{2}\sigma_\varphi}\right),
\end{aligned}
\end{array}
\end{equation}
and
\begin{equation}
\label{0C3}
\begin{array}{rcl}
\begin{aligned}
\text{E}\{\text{sgn}(\varphi)\} =& \frac{1}{\sqrt{2\pi}\sigma_\varphi} \int_{-\infty}^{\infty}\text{sgn}(\varphi)\exp^{-\left( \frac{\varphi-\bar{\varphi}}{\sqrt{2}\sigma_\varphi}\right)^2}d\varphi \\
=&\text{erf}\left(\frac{\bar{\varphi}}{\sqrt{2}\sigma_\varphi}\right)
\end{aligned}
\end{array}
\end{equation}
with $\text{erf}(\varphi)\triangleq \frac{2}{\sqrt{\pi}}\int_{0}^{\varphi}\exp^{-t^2}dt$. It has been given that $\bm \varTheta_{m,l}(k) \triangleq \text{E}\{H'( \varphi_m(k))\}{w_l^o} - \text{E}\{H'(\varphi_m(k)) \varphi_l(k)\}$ and $\bm \varXi_{m,l}(k) \triangleq \text{E}\{H'(\varphi_m(k)) H'(\varphi_l(k))\}$ for any $m,l$. Specifically, when $m=l$, we have
\begin{equation}
\label{0C4}
\begin{aligned}
\text{E}\{H'(\varphi_m(k)) \varphi_m(k)\} \approx \frac{\text{E}\{|\varphi_m(k)|\}}{\xi + \text{E}\{|\varphi_m(k)|\}}
\end{aligned}
\end{equation}
and
\begin{equation}
\label{0C5}
\begin{aligned}
\text{E}\{H'(\varphi_m(k))^2\} \approx \frac{1}{\xi^2 + 2\xi\text{E}\{|\varphi_m(k)|\} + \text{E}\{\varphi_m^2(k)\}}.
\end{aligned}
\end{equation}
When $m\neq l$, according to assumption~5 we can obtain
\begin{equation}
\label{0C6}
\begin{aligned}
\text{E}\{H'(\varphi_m(k)) \varphi_l(k)\} \approx \frac{\text{E}\{\text{sgn}(\varphi_m(k))\}}{\xi + \text{E}\{|\varphi_m(k)|\}} \text{E}\{\varphi_l(k)\}
\end{aligned}
\end{equation}
and
\begin{equation}
\label{0C7}
\begin{aligned}
\text{E}\{H'(\varphi_m(k)) H'(\varphi_l(k))\} \approx \frac{\text{E}\{\text{sgn}(\varphi_m(k))\}}{\xi + \text{E}\{|\varphi_m(k)|\}} \frac{\text{E}\{\text{sgn}(\varphi_l(k))\}}{\xi + \text{E}\{|\varphi_l(k)|\}}.
\end{aligned}
\end{equation}

\section{Possibility of $\Delta_s(\infty)<0$}
The matrix $(\bm I_{M^2} - \bm F_\infty)$ is positive definite due to the convergence of the algorithm. As such, it can be approximated by $(\bm I_{M^2} - \bm F_\infty)\approx t \cdot \bm I_{M^2}$~\cite{golub2012matrix}, where $t>0$ is finite. In this case, we are able to simplify~$\Delta_s(\infty)$ given in~\eqref{046x} as
\begin{equation}
\label{0D1}
\begin{array}{rcl}
\begin{aligned}
\Delta_s&(\infty) \approx  \frac{1}{t} \text{vec}^\text{T}(\bm I_M) \text{vec} \left( \rho \bm \varTheta(\infty) + \rho \bm \varTheta^\text{T}(\infty) + \rho^2 \bm \varXi(\infty) \right) \\
=& \frac{1}{t} \left[ 2\rho\text{E}\{\widetilde{\bm \varphi}^\text{T}(\infty) H'(\bm \varphi(\infty)) \} + \rho^2 \text{E}\{||H'(\bm \varphi(\infty))||_2^2\} \right], \\
=& \frac{\rho}{t} \text{E}\{||H'(\bm \varphi(\infty))||_2^2\} \left[ \rho + \frac{2\text{E}\{\widetilde{\bm \varphi}^\text{T}(\infty) H'(\bm \varphi(\infty)) \}}{\text{E}\{||H'(\bm \varphi(\infty))||_2^2\}} \right] .
\end{aligned}
\end{array}
\end{equation}
Obviously, $\Delta_s(\infty)<0$ is true if and only if
\begin{equation}
\label{0D2}
\begin{array}{rcl}
\begin{aligned}
0<\rho < -\frac{2\text{E}\{\widetilde{\bm \varphi}^\text{T}(\infty) H'(\bm \varphi(\infty)) \}}{\text{E}\{||H'(\bm \varphi(\infty))||_2^2\}} \triangleq \rho_\text{up}
\end{aligned}
\end{array}
\end{equation}
and
\begin{equation}
\label{0D3}
\begin{array}{rcl}
\begin{aligned}
\text{E}\{\widetilde{\bm \varphi}^\text{T}(\infty) H'(\bm \varphi(\infty)) \} < 0.
\end{aligned}
\end{array}
\end{equation}
The above relations shows that~$\text{E}\{\widetilde{\bm \varphi}^\text{T}(\infty) H'(\bm \varphi(\infty)) \}<0$ is a necessary condition to $\Delta_s(\infty)<0$ in the scenario that the vector $\bm w^o$ to be estimated is sparse. In particular, the number of zero elements (Z set) in $\bm w^o$ are much more than that of non-zero elements (NZ set).

Since $H(\cdot)$ is a real-valued convex function, from the definition of the sub-gradient~\cite{yu2019sparsity,spar2013}, the following inequality will hold:
\begin{equation}
\label{0D4}
\begin{array}{rcl}
\begin{aligned}
\widetilde{\bm \varphi}^\text{T}(\infty) H'(\bm \varphi(\infty)) &= (\bm w^o - \bm \varphi(\infty))^\text{T} H'(\bm \varphi(\infty)) \\
&\leq  H(\bm w^o) -  H(\bm \varphi(\infty)).
\end{aligned}
\end{array}
\end{equation}

Since the penalty function~$H(\bm w)$ in~\eqref{0b39} measures the sparsity of the vector $\bm w$ and undoubtedly, the true vector~$\bm w^o$ is more sparse than its estimate~$\bm \varphi(\infty)$ obtaining from the gradient update~\eqref{eq:b38a}, it is expected that
\begin{equation}
\label{0D5}
\begin{array}{rcl}
\begin{aligned}
H(\bm w^o) -  H(\bm \varphi(\infty)) &\approx - \sum_{m\in \text{Z set}}^{M} \ln(1+|\varphi_m(\infty)|/\xi) \\
& < 0.
\end{aligned}
\end{array}
\end{equation}
Consequently, by appropriately choosing the sparse penalty parameter~$\rho$, the condition $\Delta_s(\infty)<0$ is likely to be true. Conversely, when $\bm w^o$ is not sparse,~\eqref{0D5} would not hold so that $\Delta_s(\infty)<0$ is impossible regardless of $\rho$.


\bibliography{mybibfile}

\end{document}